\newcommand{\ceil}[1]{ {\lceil#1\rceil}}
\newcommand{\bi}{{\{0,1\}}}
\algrenewcommand\algorithmiccomment[1]{\textcolor{lightgray}{\hfill // #1}}
\newcommand{\Commentx}[1]{\Statex\textcolor{lightgray}{//~#1}}
\pgfplotsset{compat=1.18}
\newtheorem{theorem}{Theorem}[section] 
\newtheorem{lemma}[theorem]{Lemma}
\newtheorem{corollary}[theorem]{Corollary}
\theoremstyle{definition}
\newtheorem{definition}[theorem]{Definition}
\newtheorem{remark}{Remark}
 \newtheorem{example}{Example}
\newcommand{\bbF}{\mathbb{F}}
\newcommand{\bbN}{\mathbb{N}}
\newcommand{\bfa}{\mathbf{a}}
\newcommand{\bfc}{\mathbf{c}}
\newcommand{\bfd}{\mathbf{d}}
\newcommand{\bff}{\mathbf{f}}
\newcommand{\bfm}{\mathbf{m}}
\newcommand{\bfr}{\mathbf{r}}
\newcommand{\bfs}{\mathbf{s}}
\newcommand{\bft}{\mathbf{t}}
\newcommand{\bfu}{\mathbf{u}}
\newcommand{\bfx}{\mathbf{x}}
\newcommand{\bfy}{\mathbf{y}}
\newcommand{\bfz}{\mathbf{z}}
\newcommand{\bfA}{\mathbf{A}}
\newcommand{\cC}{\mathcal{C}}
\newcommand{\cS}{\mathcal{S}}
\newcommand{\cX}{\mathcal{X}}
\newcommand{\cY}{\mathcal{Y}}
\begin{document}
\title{Break-Resilient Codes}

\author{Canran Wang,
\IEEEmembership{Member, IEEE}, Jin Sima, \IEEEmembership{Member, IEEE} and
Netanel Raviv,
\IEEEmembership{Senior Member, IEEE}
\thanks{
This work was supported in part by NSF under
Grant CNS 2223032.
Canran Wang and Netanel Raviv are with the Department of Computer Science and Engineering,
Washington University in St. Louis, St. Louis, MO 63130 USA (email: canran@wustl.edu, netanel.raviv@wustl.edu).
Jin Sima is with the Department of Electrical and Computer Engineering, University of Illinois Urbana–Champaign, Urbana, IL 61801 USA (e-mail:
jsima@illinois.edu).
Parts of this work have previously appeared in [DOI:10.1109/ISIT57864.2024.10619135].
}}

\maketitle
\begin{abstract}

We investigate the problem of encoding data into an $(n, t)$-break-resilient code ($(n, t)$-BRC), i.e., a collections of sequences of length~$n$ from which the original data can be reconstructed even if they are adversarially broken at up to~$t$ arbitrary positions.
We establish lower bounds on the redundancy of any $(n, t)$-BRC and present code constructions that meet these bounds up to asymptotically negligible terms.
Interestingly, this problem shares similarities with the recently studied torn paper channel, which has emerged in the context of DNA data storage.

\end{abstract}
\begin{IEEEkeywords}
    Error-correcting codes, sequence reconstruction, DNA sequences.
\end{IEEEkeywords}
\section{Introduction}
Modern data embedding techniques increasingly operate outside traditional digital channels, in environments where data can be deliberately broken apart rather than merely corrupted by random noise.
As a motivating application, consider a digital fingerprint embedded within a 3D-printed component for authentication or traceability.
While various techniques for embedding bits in 3D prints have been proposed in the literature~\cite{elsayed2021information,voris2017three,wei2018embedding,chen2019embedding,delmotte2019blind,suzuki2017embedding,li2018printracker,harrison2012acoustic}, the data reconstruction relies on the integrity of the physical geometry of the object.
If the object is broken into several pieces, the information stored in its structure will likewise split into multiple fragments with no obvious order, and hence fails the data reconstruction process.

As such, a break-resilient fingerprinting scheme requires the decoder to recover the embedded information with access to only a jumbled collection of fragments of the embedded information.
The task is made even harder if the break positions do not follow any probabilistic distribution, but are deliberately chosen by an adversary who is fully aware of the coding scheme.
This paper focuses on such an adversarial model in which the information is broken apart by the adversary at~$t$ arbitrary positions, and the goal of the decoder is to recover the original information bits from the (at most)~$t+1$ resulting fragments in all possible cases.
Notably, the knowledgeable adversary is only constrained by the security parameter~$t$, and wishes to interfere with the decoding process as much as possible within his capability.

\subsection{Related Works}

Similar coding problems have been recently studied in the literature, motivated by the nascent technology of information storage in DNA molecules.
Since information storage in DNA molecules is restricted to short and unordered sequences, several works studied the so-called \emph{sliced-channel} model, in which the information bits are sliced at several evenly-spaced positions, producing a set of substrings of equal size~\cite{sima2021coding,sima2024robust,lenz2019coding}.
The \emph{torn paper coding} problem has been studied by~\cite{shomorony2021torn,shomorony2020communicating,ravi2021capacity}, where the information string is being cut by a probabilistic process (as opposed to the adversarial model considered herein), producing substrings of random lengths.
More closely related, the adversarial counterpart of torn-paper coding has been studied by~\cite{bar2023adversarial} with the restriction that all fragment lengths are between some upper and lower bounds.

Another related problem in the fields of computational biology and text processing was studied under the name of Minimum Common String Partition (MCSP)~\cite{goldstein2004minimum,jiang2012minimum,damaschke2008minimum,chrobak2004greedy}.
For two strings over the same alphabet, a~\emph{common string partition} is a multiset of substrings which can be obtained by partitioning either one of the strings. 
The objective of the MCSP problem is devising an algorithm which finds such a multiset of minimum size for any two strings given as input.
Observe that the minimum common string partition in our problem setting must be greater than~$t$ for every pair of codewords.
Otherwise, the adversary may confuse the decoder by deliberately breaking a codeword into~$(t+1)$ fragments which can be rearranged to obtain another codeword.
However, previous works focused on finding the MCSP for arbitrary strings efficiently, whereas we are interested in explicit code constructions.

\subsection{Our Contributions}
We extend the above line of works and provide a nearly-optimal binary code construction for a wide range of~$t$ values relative to the code length~$n$.
We refer to these codes as~$(n,t)$-\emph{break-resilient code} ($(n, t)$-BRC), and establish matching lower bounds.
Specifically, our analysis begins with a Gilbert–Varshamov type argument, which demonstrates the existence of an~$(n,t)$-BRC with redundancy~$\Omega(t\log n)$.
We then provide a simple reduction from break-resilient codes to traditional error-correcting codes, showing that~$\Omega(t\log n)$ is in fact the minimum redundancy of a binary $(n, t)$-BRC.
Finally, we give a novel construction achieving~$O(t\log n\log\log n)$ redundancy, which is optimal up to a small factor of~$\log\log n$.
 
In a nutshell, our code construction relies on identifying and utilizing short patterns in the information word, which we call \emph{beacons}.
For every pair of neighboring beacons, we record their relative order, protect these records with a systematic Reed–Solomon code, and concatenate the resulting parity symbols to the information word.
During decoding, the receiver scans the~$t+1$ fragments for adjacent beacons that remain intact, and then uses the parity symbols to reconstruct the full ordering of all beacon pairs.
This recovered ordering uniquely determines the correct sequence in which the fragments must be reassembled.
This process yields a code with redundancy~$O(t\log n\log\log n)$.

% With recursively defined beacons, the above process yields a code with redundancy~$O(t\log n\log\log n)$.
As a bonus, the proposed~$(n, t)$-BRC can tolerate the loss of any of the~$t+1$ fragments whose length is~$O(\log n)$.
In other words, the decoding is guaranteed to succeed even in the case where the adversary is allowed to hide small fragments; this property makes our~$(n, t)$-BRC even more practical in its potential real-world applications.
We also emphasize that the problem is more challenging over small alphabets; over (very) large alphabets a simple histogram-based construction achieves a constant number of redundant symbols (Appendix~\ref{appendix:histogram}).

The rest of this paper is organized as follows.
Section~\ref{section:preliminaries} provides a formal definition of the problem and clarifies notations.
Section~\ref{section:bounds} discusses bounds on the redundancy of an~$(n, t)$-BRC.
Section~\ref{section:code} details the construction of a nearly optimal~$(n, t)$-BRC, as well as its decoding algorithm.
Section~\ref{section:redundancy} analyzes the redundancy of such code.
Finally, Section~\ref{section:discussion} discusses different aspects of the code and suggests potential future research directions.

\section{Problem Definition and Preliminaries}\label{section:preliminaries}
Our setup includes an encoder which holds a string~$\bfx\in\bi^k$ (an information word) for some integer~$k$, that is encoded to a string~$\bfc\in\bi^n$ (a codeword) for some integer~$n>k$.
For a security parameter~$t$, an adversary breaks these~$n$ bits at arbitrary~$t$ positions or less, resulting in a multiset of at most~$t+1$ fragments. 
\begin{example}
    For~$\bfc=0100011100$ and~$t=3$, the possible fragment multisets include 
\begin{align*}
    \{\{ 0,0,1000,1110 \}\}, \{\{ 010,00111,00 \}\}\text{, and }\{\{ 01000,11100 \}\}.
\end{align*}    
\end{example}

These fragments are given to the decoder in an unordered fashion, and the goal of the decoder is to reconstruct~$\bfx$ exactly in all cases. 
The associated set of codewords in~$\bi^n$ is called a~\emph{$(n,t)$-break-resilient code}, and is denoted by~$\cC$.

The figure of merit of a given code is its \emph{redundancy}, i.e., the quantity~$n-\log|\cC|$, where~$|\cdot|$ denotes size.
Although our context implies that~$t$ is a small number relative to~$n$, we choose not refer to it as a constant in our asymptotic analysis in order to better understand the fine dependence of our scheme on it; in essence, our scheme applies to any~$n$ and any $t=o(\frac{n}{\log n\log\log n})$.

Further, we also assume that the fragments are \emph{oriented}.
That is, for any given fragment~$\bff=(c_{i},\ldots,c_{i+r})$ taken from a codeword~$\bfc=(c_1,\ldots, c_n)$, the decoder does not know the correct values of the indices~$i,\ldots,i+r$, but does know that the bit~$c_i$ is positioned to the left of the bit~$c_{i+r}$.
Orientation of fragments can be achieved rather easily by simple engineering solutions~\cite{wang2025secure}, thus we sidestep this issue in order to elucidate the problem more clearly.

Throughout this paper we use standard notations for string manipulation, such as~$\circ$ to denote concatenation, $|\bfx|$ to denote length, and for a string~$\bfx=(x_1,\ldots,x_n)$ and~$1\le a<b\le n$ we let~$\bfx[a:b]=(x_a,x_{a+1},\ldots,x_b)$, and~$\bfx[a:]=(x_a,\ldots,x_{|\bfx|})$, as well as~$[n]\triangleq\{1,2,\ldots,n\}$ for a positive integer~$n$.

\begin{remark}
    While the present work abstracts away the specifics of the motivating applications and the underlying bit-embedding techniques, in concurrent research we have developed a break-resilient coding framework tailored for forensic fingerprinting.
    Our framework embeds bits in 3D-printed objects by modulating layer widths and has been experimentally validated on a commodity printer using standard 3D printing software.
    The framework follows similar adjacency matrix-based ideas, and yet diverges from from what described next due to engineering and complexity constraints which are beyond the scope of this paper.
    For further details, we refer the interested readers to~\cite{wang2025secure}.
\end{remark}

\begin{remark}
We argue that the adversarial torn paper model of~\cite{bar2023adversarial}, where all fragment lengths lie between some~$L_\text{min}$ and~$L_\text{max}$, is suboptimal for our motivating application.
In contrast, our model imposes no lower bound on the number of bits contained in a fragment, which potentially could be as few as a single bit.
It instead limits~\emph{only the number of breaks} that may be inflicted on the object.
This limit is naturally dictated by practical factors such as available time, access to tools, or the physical strength of the adversary.
\end{remark}

Finally, we make use of the following two existing notions from coding theory and data structures. 
\subsection{Mutually uncorrelated codes}\label{section:MU}
A \emph{mutually uncorrelated} (MU) code is a set of codewords such that the prefix of one does not coincide with the suffix of any (potentially identical) other.
%for any two (potentially identical) codewords, any prefix of one does not coincide with any suffix of the other.
MU codes were introduced and investigated for synchronization purposes~\cite{levenshtein1970maximum,gilbert1960synchronization}. 
%, i.e., to enable the receiver in synchronism with the transmitter during message transmission.
Later, constructions, bounds, and applications of MU codes have been extensively studied under various names such as~\emph{cross-bifix-free codes}~\cite{bajic2004distributed,bajic2014simple,chee2013cross,bilotta2012new} and~\emph{non-overlapping codes}~\cite{blackburn2015non,wang2022q}.
Recently, MU codes have been applied to DNA-based data storage architectures~\cite{tabatabaei2015rewritable, yazdi2017portable, levy2018mutually}.
\begin{example}
    An MU code with~$n_\text{MU}=15$ in which every two different codewords are mutually uncorrelated:
    \begin{equation*}
        \{ 000011001000101, 000010100010011, 00001 1101011011,000011101011001 \}
    \end{equation*}
\end{example}

Notably, Levy~\emph{et al.}~\cite{levy2018mutually} provides a simple yet efficient construction of binary MU code.
In a nutshell, it firstly maps the information word~$\bfx\in\bi^{k_\text{MU}}$ to a binary sequence~$\bfy\in\bi^{k_\text{MU}+1}$ that is free of zero runs (i.e., all-$0$ substrings) longer than~$\ceil{\log (k_\text{MU})}$; this process introduces~$1$ redundant bit.
The corresponding MU codeword is then defined as
\begin{equation*}
    0^{\ceil{\log (k_\text{MU})}+1}\circ1\circ\bfy\circ1,
\end{equation*}
i.e., a binary string of~$(\ceil{\log (k_\text{MU})}+1)$~$0$'s, followed by~$\bfy$ surrounded by two~$1$'s.
Obviously, two codewords of this form cannot overlap with each other, and this method introduces~$\ceil{\log(k_\text{MU})}+4$ redundant bits in total, which is no more than~$\ceil{\log (n_\text{MU})}+4$, where~$n_\text{MU}=k_{\text{MU}}+\ceil{\log (k_{\text{MU}})}+4$ is the code's length.

\subsection{Key-Value stores}
Also known as a~\emph{map} and a~\emph{dictionary}, a key-value (KV) store is a fundamental data structure that organizes data as a collection of key-value pairs and has been widely used in computer programming.
In a KV store, a~\emph{key} is a unique identifier used to retrieve the associated~\emph{value}.
Specifically, the operation~$\text{KV}(\texttt{key})$ returns~$\texttt{value}$ if the pair~$(\texttt{key},\texttt{value})$ is stored in the KV store, and otherwise returns some designated symbol that indicates failure.
For ease of demonstration, we employ KV stores in the description of our proposed algorithms.

\section{Bounds}\label{section:bounds}
In this section we establish lower bounds on the redundancy of an~$(n, t)$-BRC~$\cC$.
First, we show that there exists an $(n,t)$-BRC with redundancy $O(t\log n)$, and then we prove that no code can outperform this bound.
We begin with the notion of \emph{$t$-confusability}, which underpins the results that follow.

\begin{definition}[$t$-confusability]
    Two words~$\bfx,\bfy\in\bi^n$ are~\emph{$t$-confusable} if there exists at most~$t$ break positions in~$\bfx$ and at most~$t$ break positions in~$\bfy$ which produce an identical multiset of at most~$t+1$ fragments.
\end{definition}

\begin{example}
    The words~$\bfc=0100011100$ and $\bfc'=1000100011$ are~$2$-confusable, since the break patterns
    \begin{equation*}
\bfc\mapsto01\vert00011\vert100~\mbox{and}~\bfc'\mapsto 100\vert01\vert00011
    \end{equation*}
    produce an identical fragment multiset~$\{\{ 01,00011,100 \}\}$.
    \end{example}

\subsection{Existence}
The definition above immediately yields the following lemma.

\begin{lemma}\label{lemma:tBRC}
    If every pair of distinct codewords in $\cC\subseteq\{0,1\}^n$ is \emph{not} $t$-confusable, then $\cC$ is an $(n,t)$-BRC.
\end{lemma}
\begin{proof}
    Assume, for the sake of contradiction, that~$\cC$ is not~$(n,t)$-break-resilient.
    Then there exists a codeword $\bfc\in\cC$ that can be broken into $\le t+1$ fragments whose multiset does \emph{not} uniquely determine $\bfc$.  
    Hence there is another word $\bfc'\neq\bfc$ such that $\bfc'\in\cC$ and the same set of fragments can be reordered and concatenated to form both $\bfc$ and $\bfc'$.  
    But this means $\bfc$ and $\bfc'$ are $t$-confusable, contradicting the hypothesis.
\end{proof}

We continue by bounding the number of~$t$-confusable words of a given~$\bfc\in\bi^n$.
\begin{lemma}
    A word~$\bfc\in\bi^n$ is~$t$-confusable with at most~$\binom{n-1}{t}(t+1)!$ different words.
\end{lemma}
\begin{proof}
    For a word~$\bfc\in\bi^n$, there are~$\binom{n-1}{t}$ different ways to break it~$t$ times, i.e., inserting~$t$ bars in between of~$n$ stars.
    Each way yields a (not necessarily distinct) multiset of~$(t+1)$ fragments, and for one such multiset, there are~$(t+1)!$ different ways to permute its elements.
    Concatenating the permuted fragments gives a (not necessarily distinct) word.
    
    Note that we only consider breaking~$\bfc$~$t$ times, but no less, due to the following reason: if a word~$\bfc'$ can be generated using the above procedure with~$t'<t$ breaks, it can also be generated with~$t$ breaks by ``gluing''~$t-t'$ breaks before permuting the fragments.
    Hence, the above procedure generates all possible  words that are~$t$-confusable with~$\bfc$.
\end{proof}

We prove the existence of an~$(n,t)$-BRC using a Gilbert-Varshamov type argument, i.e., we begin with a candidate set for the code, and iteratively remove words from it until it becomes an~$(n,t)$-BRC.
Let a candidate set be the entire space of~$\bi^n$.
We construct an~$(n, t)$-BRC by repeating the following procedure until the candidate set is empty: choose an arbitrary word from the candidate set, and remove all its~$t$-confusable words.
Since at most~$ \binom{n-1}{t}(t+1)!$ words are removed during each iteration, we are guaranteed to obtain a set~$\cC$ of least~$2^n/[\binom{n-1}{t}(t+1)!]$ words before the procedure terminates.
The words are pair-wise not~$t$-confusable, and hence~$\cC$ is an~$(n, t)$-BRC by Lemma~\ref{lemma:tBRC}.

\begin{theorem}
    There exists an~$(n, t)$-BRC with redundancy~$O(t\log n)$.
\end{theorem}
\begin{proof}
    The above method generates an~$(n, t)$-BRC~$\cC$ of size at least~$2^n/[\binom{n-1}{t}(t+1)!]$, and hence its redundancy is at most
\begin{align*}
    n - \log\frac{2^n}{\binom{n-1}{t}(t+1)!} &= \log\left[\binom{n-1}{t}(t+1)!\right] =\log \frac{(n-1)!(t+1)!}{(n-t-1)!t!} \\
    &= \log(t+1) + \log (n-1)+\cdots +\log (n-t)=O(t\log n).\qedhere
\end{align*}
\end{proof}

\subsection{Converse}

The~$t$-confusability implies that certain constant-weight subcodes of~$\cC$ must have large Hamming distance.
\begin{lemma}\label{lemma:tconfusable}
    Let~$\cC\subseteq\bi^n$ be an~$(n, t)$-BRC, and let~$\cC=\cC_0\cup\cC_1,\cup\ldots\cup\cC_n$ be its partition to constant weight subcodes, i.e., where~$\cC_i$ contains all words in~$\cC$ of Hamming weight~$i$, for all~$i\in[n]$. Then, the minimum Hamming distance of each~$\cC_i$ is at least~$\ceil{\frac{t+1}{2}}$.
\end{lemma}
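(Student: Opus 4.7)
The plan is to argue by contradiction using an explicit confusability construction: assume $\bfx, \bfy$ are distinct codewords in $\cC_i$ with $d \triangleq d_H(\bfx,\bfy) < \lceil (t+1)/2 \rceil$, and exhibit break patterns of size at most $t$ in each of them that yield identical fragment multisets. This contradicts the $t$-BRC property and so forces the lemma's distance bound.

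First I would record the key consequence of constant weight. Let $D = \{p_1 < p_2 < \cdots < p_d\}$ be the positions where $\bfx$ and $\bfy$ differ. Since $\bfx$ and $\bfy$ agree on positions outside $D$ and carry equal total weight, the number of $1$'s of $\bfx$ restricted to $D$ equals the number of $1$'s of $\bfy$ restricted to $D$. Because every bit is flipped on $D$, this common count must be $d/2$; in particular $d$ is even, and on $D$ each of $\bfx, \bfy$ contributes exactly $d/2$ zeros and $d/2$ ones.

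Next I would break both $\bfx$ and $\bfy$ at the \emph{same} locations, chosen so that each position of $D$ becomes an isolated singleton fragment: for every $p_j$, cut between positions $p_j - 1$ and $p_j$ (unless $p_j = 1$) and between positions $p_j$ and $p_j + 1$ (unless $p_j = n$). Cuts between consecutive differing positions are shared and boundary cuts are absent, so the total number of distinct cut locations is at most $2d$, which by the hypothesis $d \le \lceil (t+1)/2 \rceil - 1 \le \lfloor t/2 \rfloor$ is at most $t$; thus each of $\bfx, \bfy$ is broken into at most $t+1$ fragments. The resulting fragment multisets coincide: the non-singleton fragments are carved from positions on which $\bfx$ and $\bfy$ agree and are therefore identical, while the singleton multisets contribute $d/2$ zeros and $d/2$ ones on each side by the weight computation above. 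Hence $\bfx$ and $\bfy$ are $t$-confusable, contradicting that $\cC$ is a $t$-BRC.

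I do not anticipate a significant obstacle. The only bookkeeping subtleties are the boundary cases $p_1 = 1$ and $p_d = n$ and the sharing of cuts between adjacent differing positions, each of which only \emph{decreases} the number of required cuts below the $2d$ upper bound and therefore preserves the counting argument.
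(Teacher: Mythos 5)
Your proof is correct and follows essentially the same strategy as the paper's: isolate each of the $d$ differing positions as a singleton fragment using at most $2d \le t$ cuts on both codewords, observe the non-singleton fragments agree, and use the equal-weight hypothesis to argue the singleton multisets coincide. The only cosmetic difference is that you make the evenness of $d$ explicit (each side contributes exactly $d/2$ zeros and $d/2$ ones among the singletons), whereas the paper leaves this implicit in the assertion $w_H(x_{i_1},\ldots,x_{i_\ell}) = w_H(y_{i_1},\ldots,y_{i_\ell})$; both observations justify the multiset equality equally well.
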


\begin{proof}
    Let~$i\in[n]$, and assume for contradiction that there exist~$\bfx,\bfy\in\cC_i$ such that~$d_H(\bfx,\bfy)\le\ceil{\frac{t+1}{2}}-1$.
    This implies that~$\bfx$ and~$\bfy$ are $t$-confusable, as demonstrated next.
    Consider the indices~$i_1,\ldots,i_{\ell}\in[n]$, for~$\ell\le\ceil{\frac{t+1}{2}}-1$, in which~$\bfx$ and~$\bfy$ differ.
    It follows that~$\bfx$ and~$\bfy$ can be written as
    \begin{align*}
        \bfx&=\bfc_1\circ x_{i_1}\circ\bfc_2\circ\ldots\circ \bfc_\ell\circ x_{i_\ell}\circ\bfc_{\ell+1}\\
        \bfy&=\bfc_1\circ y_{i_1}\circ\bfc_2\circ\ldots\circ \bfc_\ell\circ y_{i_\ell}\circ\bfc_{\ell+1}
    \end{align*}
    for some (potentially empty) strings~$\bfc_1,\ldots,\bfc_{\ell+1}$, where~$x_{i_j}\ne y_{i_j}$ for every~$j\in[\ell]$.
    Then, since~$2\ell\le t$ if~$t$ is even and~$2\ell\le t-1$ if~$t$ is odd, if follows that~$2\ell\le t$, and therefore an adversary may break either~$\bfx$ or~$\bfy$~$2\ell$ times.
    Specifically, consider an adversary which breaks~$\bfx$ and~$\bfy$ to the immediate left and the immediate right of entry~$i_j$, for each~$j\in[\ell]$.
    This produces the following multisets of fragments that are given to the decoder:
    \begin{align*}
        \cX &= \{\{\bfc_1,\ldots,\bfc_{\ell+1},x_{i_1},\ldots,x_{i_\ell}\}\}\mbox{, and}\\
        \cY &= \{\{\bfc_1,\ldots,\bfc_{\ell+1},y_{i_1},\ldots,y_{i_\ell}\}\}.
    \end{align*}
    In addition, since~$w_H(\bfx)=w_H(\bfy)=i$, where~$w_H$ denotes Hamming weight, it follows that~$w_H(x_{i_1},\ldots,x_{i_\ell})=w_H(y_{i_1},\ldots,y_{i_\ell})$, and hence the multisets~$\{\{x_{i_1},\ldots,x_{i_\ell}\}\}$ and~$\{\{y_{i_1},\ldots,y_{i_\ell}\}\}$ are identical.
    This implies that~$\cX=\cY$, and hence~$\bfx$ and~$\bfy$ are~$t$-confusable, a contradiction.
\end{proof}

By applying the classical sphere-packing bound~\cite{wikipedia_spherepacking} to those subcodes,
Lemma~\ref{lemma:tconfusable} gives rise to the following lower bound on the redundancy of an~$(n, t)$-BRC.

\begin{theorem}\label{theorem:bound}
    an~$(n, t)$-BRC~$\cC$ satisfies~$n-\log|\cC|\ge \Omega(t\log \tfrac{n}{t})$.
\end{theorem}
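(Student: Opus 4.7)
The plan is to apply Lemma~\ref{lemma:tconfusable} weight-by-weight and extract the bound via a sphere-packing argument in the constant-weight (Johnson) metric, then recombine across weights.

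First, I would use the partition $\cC=\bigcup_{i=0}^{n}\cC_{i}$ together with pigeonhole to reduce the problem to bounding a single constant-weight subcode: $|\cC|\le (n+1)\cdot \max_{i}|\cC_{i}|$. By Lemma~\ref{lemma:tconfusable}, every $\cC_{i}$ has minimum Hamming distance at least $d:=\ceil{(t+1)/2}$. Because any two weight-$i$ strings differ in an even number of coordinates -- each ``swap'' replacing a $1$ by a $0$ at one position and a $0$ by a $1$ at another -- the minimum distance of $\cC_{i}$ in the swap metric is at least $s:=\ceil{d/2}=\Theta(t)$, and hence the swap-balls of radius $r:=\floor{(s-1)/2}=\Theta(t)$ around distinct codewords of $\cC_{i}$ are pairwise disjoint inside $\binom{[n]}{i}$. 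Each such ball has volume $V(n,i,r)=\sum_{j=0}^{r}\binom{i}{j}\binom{n-i}{j}$, so the standard sphere-packing inequality yields $|\cC_{i}|\le \binom{n}{i}/V(n,i,r)$.

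Second, I would lower bound $V(n,i,r)$ in the ``bulk'' range $r\le i\le n-r$ by its largest term $\binom{i}{r}\binom{n-i}{r}\ge (i(n-i)/r^{2})^{r}$. Minimizing over that range gives $V(n,i,r)\ge ((n-r)/r)^{r}=(\Omega(n/t))^{\Omega(t)}$, so every bulk weight satisfies $|\cC_{i}|\le 2^{n}\cdot (O(t/n))^{\Omega(t)}$. For the extreme weights $i<r$ or $i>n-r$, the trivial estimate $|\cC_{i}|\le \binom{n}{i}\le n^{r}=2^{O(t\log n)}$ is already sharper than the bulk bound for any $t$ in the range considered in this paper (namely $t=o(n/(\log n\log\log n))$), since in that regime $n-\Omega(t\log(n/t))\gg t\log n$.

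Combining both cases gives $\max_{i}|\cC_{i}|\le 2^{n}\cdot (O(t/n))^{\Omega(t)}$, so $\log|\cC|\le \log(n+1)+n-\Omega(t\log(n/t))$, which rearranges to the claimed redundancy bound $n-\log|\cC|\ge \Omega(t\log(n/t))$ after absorbing $\log(n+1)$ into the implicit constant. The only delicate point is the ball-volume estimate; the correct tool is the familiar $\binom{m}{k}\ge (m/k)^{k}$ applied to both factors of $V(n,i,r)$, so no genuine obstacle arises -- it is purely a matter of tracking the constants implicit in $r=\Theta(t)$ carefully so that the full $\log(n/t)$ factor (rather than only $\log n$) emerges in the final bound.
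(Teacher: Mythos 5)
Your proof is correct, and its backbone is identical to the paper's: partition $\cC$ into constant-weight subcodes, apply the pigeonhole inequality $|\cC|\le(n+1)\max_i|\cC_i|$, invoke Lemma~\ref{lemma:tconfusable} to get minimum Hamming distance $\ge\ceil{(t+1)/2}$, sphere-pack, and rearrange. The one place you diverge is the metric used for the sphere-packing step: you convert the Hamming distance bound into a swap-distance (Johnson metric) bound and pack Johnson balls inside the slice $\binom{[n]}{i}$, whereas the paper simply packs Hamming balls inside all of $\bi^n$, using the observation that a constant-weight code is still a valid binary code of the same minimum Hamming distance. The Johnson route you take is the ``native'' sphere-packing bound for constant-weight codes and is strictly tighter weight-by-weight, but here it costs you the extra bulk/extreme-weight case split and roughly halves the packing radius (your $r=\floor{(\ceil{d/2}-1)/2}\approx t/8$ versus the paper's $t'=\floor{(d-1)/2}\approx t/4$), and both routes yield the same $\Omega(t\log\tfrac{n}{t})$ in the end. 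The paper's Hamming variant is the more economical choice: it avoids the case analysis entirely because the bound $|\cC_{i}|\le 2^n\big/\sum_{j=0}^{t'}\binom{n}{j}$ holds uniformly over all weights $i$, with nothing lost in the asymptotics.
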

\begin{proof}
    For~$i\in[n]$ let~$\cC_i$ be the set of all codewords of~$\cC$ of Hamming weight~$i$, as in Lemma~\ref{lemma:tconfusable}, and let~$\cC_{i_\text{max}}$ be the largest set among the~$\cC_i$'s.
    We have that
    \begin{align*}
        \log|\cC|=\log(\textstyle\sum_{i=1}^n|\cC_i|)\le \log(n\cdot |\cC_{i_\text{max}}|)\le \log n+\log|\cC_{i_\text{max}}|.
    \end{align*}
    Furthermore, it follows from Lemma~\ref{lemma:tconfusable} that~$\cC_{i_\text{max}}$ is of minimum Hamming distance at least~$\ceil{\frac{t+1}{2}}$.
    Therefore, by the sphere packing bound we have that 
    \begin{align*}
        |\cC_{i_{\text{max}}}|\le \frac{2^n}{\sum_{j=0}^{t'}\binom{n}{j}}, \text{ where }t'\triangleq \bigg\lfloor\frac{\ceil{\frac{t+1}{2}}-1}{2}\bigg\rfloor\approx\frac{t}{4}.
    \end{align*}
    Hence, it follows that
    \begin{align*}
        n-\log|\cC|&\ge n-\log n-\log|\cC_{i_\text{max}}|\ge n-\log n-n+\log\left( \textstyle\sum_{j=0}^{t'}\binom{n}{j} \right)\\
        &\ge \log\binom{n}{t'}-\log n\overset{(\star)}{\ge}  \log((n/t')^{t'})-\log n=t'\log n-t'\log(t')-\log n=\Omega(t\log \tfrac{n}{t}),
    \end{align*}
    where~$(\star)$ follows since~$\binom{n}{t'}=\frac{n}{t'}\cdot\frac{n-1}{t'-1}\cdot\ldots\cdot \frac{n-t'+1}{1}$, and each of these terms is larger than~$n/t'$.
    %is a easily provable. 
\end{proof}

Since we are mostly interested in the parameter regime where~$t$ is asymptotically smaller than~$n$, note that Theorem~\ref{theorem:bound} implies a minimum of~$\Omega(t\log n)$ redundant bits whenever~$t=O(n^{1-\epsilon})$ for any constant~$\epsilon>0$. 

\section{Code Construction}\label{section:code}

\begin{figure}[t]
	\centering
	\begin{subfigure}{0.49\textwidth}
		\includegraphics[width=1\textwidth]{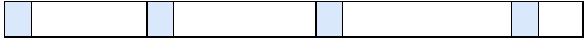}
		\caption{}
	\end{subfigure}
    \hfill
	\begin{subfigure}{0.49\textwidth}
		\includegraphics[width=1\textwidth]{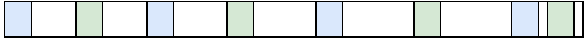}
		\caption{}
	\end{subfigure}\\
    \centering
	\begin{subfigure}{0.49\textwidth}
		\includegraphics[width=1\textwidth]{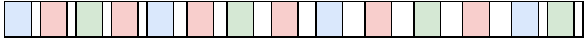}
		\caption{}
	\end{subfigure}
        \hfill
	\begin{subfigure}{0.49\textwidth}
		\includegraphics[width=1\textwidth]{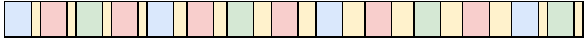}
		\caption{}
	\end{subfigure}
 	\caption{Illustration of the inductive identification of \emph{beacons} and \emph{residuals} on a uniformly random string~$\bfz$.
    (a) Level-$0$ beacons (cyan) are all occurrences of codewords from a predetermined mutually-uncorrelated code~$\cC_\text{MU}$ of length~$n_\text{MU}$.
    (b)-(c) Level-$1$ (green) and level-$2$ (red) beacons are defined inductively: for every pair of adjacent beacons already defined, we mark the~$n_\text{MU}$-bit substring which lies in the middle between them as a beacon of the new level.
    (d) Once no further layers can be added, the remaining substrings (yellow) between adjacent beacons are designated as \emph{residuals}.
    }
    \label{fig:beacons-and-residuals}
\end{figure}

\subsection{Overview}~\label{section:overview}
Many synchronization problems in coding theory (e.g.,~\cite{cheng2018deterministic,shomorony2021torn,bar2023adversarial}), where a sequence of received bits must be aligned against an original reference sequence, rely on matching short, easily recognizable \emph{beacon} strings embedded \emph{at fixed positions} within the codeword.
However, the use of \emph{beacons at fixed positions} comes at a heavy cost in terms of redundancy, since the beacon bits by themselves contain no information.
To remedy this, we employ the observation of~\cite{cheng2018deterministic}, that \emph{naturally occurring} patterns in a random string can be utilized as beacons.
That is, a uniformly random string is likely to contain numerous instances of carefully defined patterns that can serve as beacons, thereby enabling the alignment of unordered fragments with the original information.
This insight enables the construction of a break-resilient code with low redundancy from randomly sampled binary strings.
 
Intuitively, the beacons in our scenario should satisfy two conditions: (i) they are short and easily recognizable, and (ii) they do not overlap.
These two properties enable the decoder to unambiguously locate every beacon which survives the breaks, and guarantee that any single break can destroy at most one beacon (since if beacons overlap, a break within the overlapping segment would eliminate all of them).
In what follows we present a unique way of identifying naturally occurring beacons in any given~$\bfz\in\{0,1\}^m$. 
This method, which requires one to fix a mutually uncorrelated code~$\cC_{\text{MU}}$ a priori, will be the basis for the construction of our~$(n,t)$-BRC.
For ease of exposition we provide high-level details in this section, and full formal definitions are given in Section~\ref{section:encoding} and Section~\ref{section:decoding}.

\begin{definition}[level-$0$ beacon]\label{definition:l0beacons}
    Let~$\cC_\text{MU}$ be a mutually uncorrelated code (see Section~\ref{section:MU}) of length~$n_\text{MU}$.
    For~$\bfz\in\bi^m$, where~$m\gg n_\text{MU}$, the multiset of level-$0$ beacons is
    \begin{equation}\label{eq:l0beacons}
        \cS_0 \triangleq \{\mbox{all substrings of~$\bfz$ which are codewords of~$\cC_\text{MU}$}\}.
    \end{equation}
\end{definition}

With this choice, the decoder can detect every beacon in~\emph{any} fragment simply by sliding a window of length $n_\text{MU}$ over the fragment and checking membership in $\cC_\text{MU}$.
Moreover, since codewords in $\cC_{\text{MU}}$ are mutually uncorrelated, no two level-$0$ beacons can overlap.
As a result, each intact beacon can be detected in a fragment without ambiguity.
Since we depend on these beacons to infer the correct ordering of fragments, it is advantageous to embed as many of them as possible.
Accordingly, we now introduce higher-level beacons to enrich the beacon structure.
\begin{definition}[level-$\ell$ beacon]\label{definition:llbeacons}
    For~$\ell>0$, assume~$\cS_0,\ldots,\cS_{\ell-1}$ are already defined on~$\bfz$.
    A level-$\ell$ beacon is any substring of length~$n_\text{MU}$ whose starting position is the midpoint between the starting points of two adjacent beacons in~$\cS_0\cup\ldots\cup\cS_{\ell-1}$, and does not overlap with either of them.
\end{definition}
That is, a level-$\ell$ beacon lies in the middle of two adjacent beacons from level~$0$ through~$(\ell-1)$, if they are at least~$n_\text{MU}$ bits apart.
In cases where two adjacent beacons are too close for a new-level beacon to fit between them, we denote the bits in between as a \emph{residual}.
By induction, beacons at all levels remain pairwise disjoint, partitioning the entire string $\bfz$ into beacons and residuals.
Figure~\ref{fig:beacons-and-residuals} illustrates this inductive definition; we further note that the partitioning of any string~$\bfz\in\{0,1\}^m$ to beacons and residuals is \textit{unique}.

The success of decoding relies on the following key observation: if (a) every pair of beacons in $\bfz$ is distinct (i.e.,~$\cS_0$ in~\eqref{eq:l0beacons} forms a set rather than a multiset), and (b) both the identity of a beacon and its position in~$\bfz$ are known,
then the beacon can serve as an  ``anchor'', allowing the fragment containing it to be placed unambiguously to its correct position in~$\bfz$.
Building on this observation, the decoding process proceeds recursively.

The decoder begins with an incomplete reconstruction~$\bfz' \in\bi^m$, in which only level-$0$ beacons are placed at their precise positions as in~$\bfz$; how the decoder acquires this initial information will be detailed in the later sections.
It then examines each fragment and anchors every fragment containing at least one such beacon to~$\bfz'$.
These newly anchored fragments partially reveal level-$1$ beacons and their respective positions, enabling the next iteration of anchoring.
This recursive procedure continues until no further fragments can be placed, and then the decoding process concludes with the recovery of all residual symbols.
A schematic illustration of this process is provided in Figure~\ref{fig:affix}.

The central technical questions are: (a) how can the decoder reliably identify level-$0$ beacons and their exact positions initially, and
(b) how can it progressively discover all level-$\ell$ beacons using only the partial information revealed by fragments anchored in previous iterations, despite the adversarial nature of the channel?
Addressing these questions constitutes the core of our code construction, whose formal details are developed in the subsequent sections.

\subsection{Encoding}\label{section:encoding}

Let~$t$ be the security parameter,  and let~$m>t$ be the length of a uniformly random string~$\bfz$.
Let~$\cC_\text{MU}$ be a mutually uncorrelated code whose code length is~$n_\text{MU}=c\log m$, where~$c\geq 3$ is a constant (but not necessarily an integer).
The code length, size, and redundancy of our~$(n,t)$-BRC $\cC$ are functions of~$t$ and~$m$, and will be discussed in Section~\ref{section:redundancy}.
Since \cite{levy2018mutually} provides an efficient construction of binary MU code of any length~$n_{\text{MU}}$ with no more than~$\ceil{\log (n_\text{MU})}+4$ redundant bits (see Section~\ref{section:MU}), which is less than~$\log (n_\text{MU})+5$, we have the code size
\begin{equation}
    \vert\cC_\text{MU}\vert\geq\frac{2^{n_\text{MU}}}{2^{\log(n_\text{MU})+5}}=\frac{2^{c\log m}}{\beta c\log m},~\mbox{where~$\beta=2^5=32$}.
\end{equation}

Our codeword construction begins by selecting a uniformly random string~$\bfz\in\bi^m$ to serve as the information word, while rejecting and resampling if~$\bfz$ does not satisfy several criteria.
These criteria enable the encoder to leverage the structure of the beacons to generate carefully designed redundancy bits, transforming~$\mathbf{z}$ into a $(n,t)$-break-resilient codeword.

First, we require that adjacent level-$0$ beacons are not too far apart.
This constraint limits the number of beacon levels, which in turn bounds the redundancy of the code, as we will show in sequel.
Second, we require that every beacon appears \emph{exactly once} in~$\mathbf{z}$ to avoid ambiguity during decoding due to the reason mentioned in the previous section.
Finally, we designate $t+1$ lexicographically first codewords~$\mathbf{m}_0, \ldots, \mathbf{m}_t \in \mathcal{C}_{\text{MU}}$ as \emph{markers}.
These markers will later be used to distinguish redundancy bits from information bits, and therefore must not appear in $\mathbf{z}$.

Together, these conditions define what we call a \emph{legit} string, formalized below.

\begin{figure}[t]
	\centering
	\begin{subfigure}{0.49\textwidth}
		\includegraphics[width=1\textwidth]{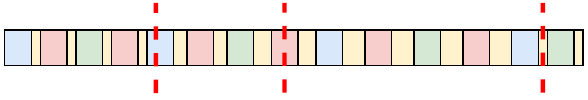}
		\caption{}
	\end{subfigure}
    \hfill
	\begin{subfigure}{0.49\textwidth}
		\includegraphics[width=1\textwidth]{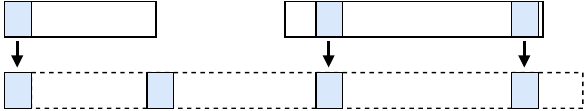}
		\caption{}
	\end{subfigure}\\
    \centering
	\begin{subfigure}{0.49\textwidth}
		\includegraphics[width=1\textwidth]{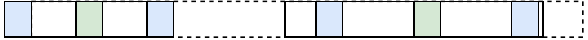}
		\caption{}
	\end{subfigure}
        \hfill
	\begin{subfigure}{0.49\textwidth}
		\includegraphics[width=1\textwidth]{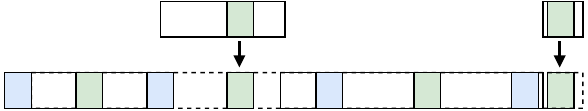}
		\caption{}
	\end{subfigure}
 	\caption{(a) The string~$\bfz$ (Figure~\ref{fig:beacons-and-residuals}) is adversarially broken along the dashed red lines, and the fragments are received unordered at the decoder.
    (b) At the first step, the decoder locates all fragments which contain at least one level-$0$ beacon. 
    These can be anchored to their original position according to redundant information that is appended to the codeword; this redundant information, which is much shorter than the codeword itself, is protected in more wasteful means.
    (c) The midpoints between anchored level-$0$ beacons are level-$1$ beacons, revealing partial information about the ordering of \emph{all} level-$1$ beacons.
    (d) The partial information about the ordering of level-$1$ beacons enables any fragment which contains an intact level-$1$ beacon to be anchored to its correct position. 
    This process repeats with level-$2$, level-$3$ beacons, etc., until no more fragments can be anchored. Note that the recovery of residuals are not included in the figures.
    }
    \label{fig:affix}
\end{figure}

\begin{definition}\label{def:legit}
A binary string~$\bfz\in\bi^m$ is called~\emph{legit} if it satisfies the following properties.

\begin{enumerate}[start=1,label={(\Roman*)}]
    \item \label{item:distance}
    Every interval of~$(2\beta c\log^2m+c\log m-1)$ bits in~$\bfz$ contains a level-$0$ beacon.
    \item \label{item:distinct}
    Every two non-overlapping substrings of length~$c\log m$ of~$\bfz$ are distinct. 
    \item \label{item:no-markers}
    $\bfz$ does not contain any of the markers~$\bfm_0,\ldots,\bfm_t$. 
\end{enumerate}    
\end{definition}

In a nutshell, an~$(n, t)$-BRC codeword~$\bfc\in\cC$ is constructed by attaching recursively generated redundancy bits to a legit string~$\bfz$.
Specifically, for level-$\ell$, let
\begin{equation*}
    \cup_{j=0}^\ell\cS_j= \{\bfs_1,\bfs_2,\ldots,\bfs_{r_\ell}\}
\end{equation*}
be the set of all~$r_\ell$ beacons from level~$0$ through~$\ell$, indexed by order of appearance in~$\bfz$, i.e.,~$\bfs_i$ is on the left of~$\bfs_j$ for pair~$i,j\in[r_l],i<j$.

\begin{algorithm}[t]\caption{Encoding}\label{alg:Encoding}
\begin{algorithmic}[1]
\Statex \textbf{Input:}~{A legit binary string $\bfz\in\bi^m$}
\Statex \textbf{Output:}~{A codeword $\bfc\in\bi^n$}
    \State Let~$\bfu_1,\ldots,\bfu_{t}$ be empty strings.
    \State $\bfy\gets \bfm_0\circ\bfz$\label{line:attachm0}
    \State Let~$\bfs_1,\ldots,\bfs_r$ be the $r$ level-$0$ beacons in~$\bfz$, let~$i_1,\ldots,i_r$ be their indices (in ascending order), i.e.,~$\bfs_j=\bfy[i_j,i_j+c\log m-1]$.
    \Statex \textcolor{red}{$\triangleright~\textsc{redundancy for level-$0$ beacons}$}
    \Commentx{$\texttt{BEACONS}$ maps an index~$i$ to~$\bfy[i,i+c\log m-1]$ if~$\bfy[i,i+c\log m-1]$ is a level-$0$ beacon.}
    \State Let~$\texttt{BEACONS}$ be a key-value store of size~$r$ such that~$\texttt{BEACONS}[i_j]=\bfs_j$ for all~$j\in[r]$. \label{line:l0beacons}
    \State Let~$\bfA=[A_{a,b}]\in\bbN^{\vert\cC_\text{MU}\vert\times \vert\cC_\text{MU}\vert}$ be an all-$0$ matrix.
    \ForAll{keys~$k$ in $\texttt{BEACONS}$ in ascending order}\label{line:createAdjacencyMatrix1}
        % \State Let $k_\text{next}$ be the smallest key in $\texttt{BEACONS}$ larger than~$k$, or break the loop if~$k$ is the largest.
        \State Let $k_\text{next}$ be the smallest key in $\texttt{BEACONS}$ larger than~$k$, or~$m+c\log m+1$ if~$k$ is the largest.
        \State Let~$a,b$ be the indices of~$\texttt{BEACONS}[k]$ and~$\texttt{BEACONS}[k_\text{next}]$ in~$\cC_\text{MU}$, respectively, when~$\cC_\text{MU}$ is ordered lexicographically.
        \State $A_{a,b}\gets k_\text{next}-k-c\log m$\Comment{$A_{a,b}\neq 0$ implies that two codewords of~$\cC_\text{MU}$ with indices~$a$ and~$b$}\label{line:createAdjacencyMatrix2}
        \Commentx{\hfill appear as adjacent level-$0$ beacons in~$\bfy$, separated by~$A_{a,b}$ many bits.}
    \EndFor
    \State $\texttt{COMP-A}\gets \textsc{compress-adjacency-matrix}(\bfA)$
    \Comment{Every row of~$\bfA$ is compressed to~$2c\log m$ bits, making a vector~\texttt{COMP-A} of~$\vert\cC_\text{MU}\vert$ elements.}
    \label{line:compressAdjacencyMatrix}
    \State $\bfd_1,\ldots,\bfd_{4t}\in\bbF_{2^{2c\log m}}\gets \textsc{rs-encode}(\texttt{COMP-A},4t)$\Comment{Every element of~\texttt{COMP-A} is treated as an element in~$\bbF_{2^{2c\log m}}$.}\label{line:encodeAdjacencyMatrix}
    \ForAll{$l\in[t]$}~$\bfu_l\gets\bfd_{4l-3}\circ\bfd_{4l-2}\circ\bfd_{4l-1}\circ\bfd_{4l}$\label{line:appendAdjacencyMatrixRedundancy}
    \EndFor
    \Statex \textcolor{red}{$\triangleright~\textsc{redundancy for higher-level beacons}$}
    \State Let~\texttt{NEW-BEACONS},~\texttt{RESIDUALS} be empty key-value stores.
    \For{$\texttt{level}\gets 1,\ldots,\log\log m+6$}
        \ForAll{keys~$k$ in $\texttt{BEACONS}$ in ascending order}
            \State Let $k_\text{next}$ be the smallest key in $\texttt{BEACONS}$ larger than~$k$, or~$m+c\log m+1$ if~$k$ is the largest.
            % \State Let $k_\text{next}$ be the smallest key in $\texttt{BEACONS}$ larger than~$k$, or break if~$k$ is the largest.
            \If{$k_\text{next}-k\ge 2c\log m$}\label{noFit}\label{line:levelUp1}
                \State $u \gets (k+k_\text{next})/2$
                \State $\texttt{NEW-BEACONS}[u]\gets\bfy[u,u+c\log m-1]$\label{line:levelUp2}
            \ElsIf{$k_\text{next}-k>c\log m$}\label{line:noFit}\Comment{$k_\text{next}-k\le c\log m$ is impossible due the mutual uncorrelation of~$\cC_\text{MU}$.}
                \State $v\gets k+c\log m$
                \State $\texttt{RESIDUALS}[v]\gets\textsc{pad}(\bfy[v,k_\text{next}-1])$\label{line:storeResidualBits}                
            \EndIf
        \EndFor
        \State $\texttt{BEACONS} \gets \texttt{BEACONS} \cup \texttt{NEW-BEACONS}$        
        \State $\texttt{NEW-BEACONS}\gets \texttt{EMPTY}$
    \State Let~$k_1,k_2,\ldots$ be all keys in~\texttt{BEACONS} in increasing order.
        \State $\bfr_1,\ldots,\bfr_{2t}\gets \textsc{rs-encode}((\texttt{BEACONS}[k_1],\texttt{BEACONS}[k_2],\ldots),2t)$\label{line:encodebeacons}
      % \State $\bfr_1,\ldots,\bfr_{2t}\gets \textsc{encode}(\texttt{BEACONS},2t)$\label{line:encodebeacons}
      \ForAll{$l\in[t]$}~$\bfu_l\gets\bfu_l\circ\bfr_{2l-1}\circ\bfr_{2l}$\label{line:appendbeaconRedundancy}\EndFor
    \EndFor
    \Statex \textcolor{red}{$\triangleright~\textsc{redundancy for residuals}$}
    \State Let~$k_1,k_2,\ldots$ be all keys in~\texttt{RESIDUALS} in increasing order.
    \State $\bft_1\ldots,\bft_{3t}\gets\textsc{rs-encode}((\texttt{RESIDUALS}[k_1],\texttt{RESIDUALS}[k_2],\ldots),3t)$\label{line:encodeResiduals}
    \ForAll{$l\in[t]$}~$\bfu_l\gets\bfu_l\circ\bft_{3l-2}\circ\bft_{3l-1}\circ\bft_{3l}$\label{line:appendResidualRedundancy}\EndFor
    \Statex \textcolor{red}{$\triangleright~\textsc{instrumentation and final assembly}$}
    \State $\bfc\gets \bfy$
    \ForAll{$l\in[t]$}\label{line:finalAssemblyStart}
        \State $\bfc= \bfm_l\circ\bfu_l[1:c\log m/2]\circ\bfm_l\circ\bfu[c\log m/2+1:c\log m]\circ\bfm_l\cdots\circ\bfc$\Comment{instrumentation and assembly.}
    \EndFor\label{line:finalAssemblyEnd}
    \State \textbf{Output}~$\bfc$
\end{algorithmic}
\end{algorithm}
Denoting~$r_0$ by~$r$ for clarity, Property~\ref{item:distance} readily implies the following lemma, which is required in the sequel and is easy to prove.
    \begin{lemma}\label{lemma:speratedIntervals}
    A legit~$\bfz\in\{0,1\}^m$ can be written as
    \begin{equation*}
    \bfz=\bfz_1\circ\bfs_1\circ\bfz_2\circ\bfs_2\circ\cdots\circ\bfs_{r}\circ\bfz_{r+1}, 
    \end{equation*}
    where~$\bfz_1,\bfz_2,\ldots,\bfz_{r+1}$ are (potentially empty) intervals of~$\bfz$ separated by the level-$0$ beacons~$\bfs_1,\bfs_2,\ldots,\bfs_{r}$ (each of length~$c\log m$), and~$\vert\bfz_j\vert<2\beta c\log^2 m$ for every~$j\in[r+1]$.
\end{lemma}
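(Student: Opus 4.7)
The plan is to prove the contrapositive: if some $|\bfz_j| \geq 2\beta c \log^2 m$, then $\bfz$ violates Property~\ref{item:distance} and hence is not legit. The strategy is to exhibit an explicit interval of length exactly $2\beta c\log^2 m + c\log m - 1$ within $\bfz$ that contains no level-$0$ signature, thus contradicting the definition of legitness.

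First, I would fix $j \in [r]$ with $|\bfz_j| \geq 2\beta c\log^2 m$ and construct the offending interval $J$. For $j \geq 2$, take $J$ to begin at the second bit of $\bfs_{j-1}$ and extend for $2\beta c\log^2 m + c\log m - 1$ positions; by the assumed lower bound on $|\bfz_j|$ the suffix of $J$ lies entirely in the first $2\beta c\log^2 m$ bits of $\bfz_j$, so $J$ is a genuine substring of $\bfz$. For $j=1$ there is no preceding signature, so I would instead take $J$ to be the first $2\beta c\log^2 m + c\log m - 1$ bits of $\bfz$, which fits because $|\bfz_1| \geq 2\beta c\log^2 m$ and a signature $\bfs_1$ (or just more bits of $\bfz_1$) follows. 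The case $j = r$ is handled identically to the general $j\geq 2$ case, since we only use the signature to the left.

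Next, I would observe that any level-$0$ signature fully contained in $J$ must start at some position in the first $|J|-c\log m + 1 = 2\beta c\log^2 m$ positions of $J$, and then rule out both possible locations of such a start. The easy sub-case, positions lying inside $\bfz_j$, is excluded because the decomposition in the hypothesis enumerates \emph{all} level-$0$ signatures of $\bfz$ (namely $\bfs_1,\ldots,\bfs_{r-1}$), none of which begin inside any $\bfz_j$. The subtler sub-case (when $j\geq 2$) concerns positions in the last $c\log m - 1$ bits of $\bfs_{j-1}$: a signature starting there would have a non-empty prefix equal to a non-empty suffix of $\bfs_{j-1}$, both of which are codewords of $\cC_\text{MU}$, directly violating the defining property of a mutually uncorrelated code.

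Combining these, no level-$0$ signature can lie entirely within $J$, contradicting Property~\ref{item:distance}. The main obstacle is the second sub-case: one must carefully invoke the mutual uncorrelation of $\cC_\text{MU}$ to rule out a spurious signature that begins partway through $\bfs_{j-1}$ and protrudes into $\bfz_j$, since such a configuration is not directly forbidden by the decomposition and must be excluded on structural grounds.
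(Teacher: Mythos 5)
Your proof is correct. The paper states Lemma~\ref{lemma:speratedIntervals} follows ``readily'' from Property~\ref{item:distance} and provides no explicit argument; your contrapositive construction of an offending window of length $2\beta c\log^2 m + c\log m - 1$, combined with the observation that a spurious signature starting inside the tail of $\bfs_{j-1}$ would force a proper prefix of one $\cC_{\text{MU}}$ codeword to equal a proper suffix of another (impossible by mutual uncorrelation), is the natural fleshing-out of that claim and handles the boundary cases $j=1$ and $j=r$ correctly.
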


A codeword~$\bfc$ is constructed by feeding a legit string~$\bfz$ to Algorithm~\ref{alg:Encoding} (sub-routines are described in Algorithm~\ref{alg:utilities} in Appendix~\ref{appendix:auxiliary}).
First, the algorithm attaches~$\bfm_0$ to the left hand side of~$\bfz$ (line~\ref{line:attachm0}); the resulting string~$\bfy$ is then called the~\emph{information region} of the codeword.
Note that~$\bfm_0$ serves two purposes here.
First, it becomes the first level-$0$ beacon in~$\bfy$, i.e.,~$\bfy[1:{c\log m}]=\bfm_0$. 
Second, since the algorithm will attach redundancy bits (i.e.,~\emph{redundancy region}) to the left of~$\bfy$, the string~$\bfm_0$ marks the transition between the information region and the redundancy region. 

\subsubsection{\textbf{Redundant bits for level-0 beacons}}
Recall that the recursive decoding procedure outlined in Section~\ref{section:overview} requires, as a prerequisite, that the decoder first identify all level-0 beacons along with their exact positions.
To ensure this, the encoder appends carefully designed redundancy bits that safeguard this initial information.
Specifically, it encodes both the pairwise ordering information and the distance between every two adjacent level-$0$ beacons.

The encoder creates a key-value (KV) store~\texttt{BEACONS}, which maps an index~$i$ of~$\bfy$ to the substring~$\bfy[i,i+c\log m-1]$ if the latter is a level-$0$ beacon, i.e., matches a codeword of~$\cC_\text{MU}$ (line~\ref{line:l0beacons}).
It then creates an all-$0$~$\vert\cC_\text{MU}\vert\times \vert\cC_\text{MU}\vert$ matrix~$\bfA$ to records the correct~\emph{pairwise ordering} and~\emph{pairwise positional distance}, of level-$0$ beacons.
Specifically, for each pair of adjacent beacons~$\bfs_j$ and~$\bfs_{j+1}$ in~$\bfy$, let~$a,b$ be the lexicographical orders of them as MU codewords in~$\cC_\text{MU}$, respectively.
The encoder update the element~$A_{a,b}$ of~$\bfA$ as the number of bits lies in between of~$\bfs_j$ and~$\bfs_{j+1}$ (line~\ref{line:createAdjacencyMatrix1}--\ref{line:createAdjacencyMatrix2}).

By Property~\ref{item:distinct} (a codeword of~$\cC_\text{MU}$ appears at most once in~$\bfz$) and Property~\ref{item:no-markers} (markers do not appear in~$\bfz$), every row of~$\bfA$ is either (1) the all-$0$ vector, or (2) a vector with exactly one non-zero entry, located in one of the~$\vert\cC_\text{MU}\vert-t$ positions (since~$\bfy$ is created by attaching~$\bfm_0$ to~$\bfz$, and~$\bfm_0,\ldots,\bfm_t$ do not to appear in the legit string~$\bfz$).

Consequently, there are fewer than~$\vert\cC_\text{MU}\vert$ admissible positions for the non-zero entry in each row, and by Lemma~\ref{lemma:speratedIntervals}, the number of possible values such a non-zero entry (if exists) is bounded above by~$2\beta c\log^2m$.
As such, there exist at most
$\vert\cC_\text{MU}\vert\cdot 2\beta c\log^2m\leq m^c\cdot m^c= m^{2c}$ different possible values for one individual row of~$\bfA$.
This fact enables to compress~$\bfA$ into~\texttt{COMP-A}, a vector in~$\bbF_{2^{2c\log m}}^{\vert\cC_\text{MU}\vert}$ (line~\ref{line:compressAdjacencyMatrix}).
This is done in the function~\textsc{compress-adjacency-matrix} (line~\ref{line:compress-adjacency-matrix}, Alg.~\ref{alg:utilities}), which applies function~\textsc{compress-row} to every individual row of~$\bfA$ (line~\ref{line:compressIndividualRow}, Alg.~\ref{alg:utilities}).
Note that the function~\textsc{compress-row} is simple to implement using indexing methods and its details are omitted for brevity.
The vector~\texttt{COMP-A} is later encoded using a systematic Reed-Solomon code to produce~$4t$ parity symbols~$\bfd_1,\ldots,\bfd_{4t}\in\bbF_{2^{2c\log m}}$\footnote{
The encoding is performed over the field of size~$m^{2c}$, and requires at least~$\vert\cC_\text{MU}\vert+4t$ distinct field elements; this is the case since~$m^{2c}- \vert\cC_\text{MU}\vert-4t\geq m^{2c}-m^c-t\geq 0$.} (line \ref{line:encodeAdjacencyMatrix}), and
every group of~$4$ is gathered and attached to a redundancy string in~$\bfu_1,\ldots,\bfu_t$ (line~\ref{line:appendAdjacencyMatrixRedundancy}).
% The algorithm then proceeds recursively.

\subsubsection{\textbf{Redundancy for higher-level beacons}}
Since the initial information specifies the exact positions of all level-$0$ beacons, the positions of higher-level beacons are inherently determined by their structure detailed in Definition~\ref{definition:llbeacons}.
Consequently, the encoder needs only to protect the identities of the beacons at each level~$\ell>0$, rather than their positions.

During recursive step~$\ell\ge 1$, the algorithm adds level-$\ell$ beacons to \texttt{BEACONS}, by locating the midpoint between every two existing beacons (line~\ref{line:levelUp1}--\ref{line:levelUp2}).
If the interval between two consecutive beacons is too short to contain a beacon (line~\ref{line:noFit}), it is padded to a string of length~$c\log m$, stored in a KV store~\texttt{RESIDUALS} (line~\ref{line:storeResidualBits}), and referred to as a~\emph{residual}.
The padding is performed by attaching a~$1$ and sufficiently many~$0$'s until the padded interval is~$c\log m$ bits long (line~\ref{line:paddingFunction}, Alg.~\ref{alg:utilities}).
For the sake of encoding and decoding, it is important to note that this padding operation is injective and reversible.

The beacons in level-$1$ through level-$\ell$, stored in~\texttt{BEACONS}, are viewed as a vector sorted in ascending order of the keys, and then encoded with a Reed-Solomon code to generate~$2t$ redundancy symbols\footnote{
The encoding is performed over the field of size~$2^{c\log m}=m^{c}$, each in~$\bbF_{2^{c\log m}}$.
Since there are at most~$m/(c\log m)$ residuals/beacons in~$\bfz$, the encoding requires~$m/{(c\log m})+2t$ distinct field elements.
The encoding is feasible due to the fact that~$m^{c}- m/(c\log m)-2t\geq m^{c-1}-2t\geq m^{c-2}-t\geq 0$.} (line~\ref{line:encodebeacons}).
The iteration proceeds until no new beacons can fit between any two adjacent existing beacons.
Due to Lemma~\ref{lemma:speratedIntervals}, at most~$2\beta \log m-1$ non-overlapping beacons, each of length~$c\log m$, can fit in between any two level-$0$ beacons.
Meanwhile, the total number of beacons inside such interval is~$2^{\ell}-1$ after~$\ell$ iterations, since every level-$\ell$ beacon is located in the middle of intervals separated by beacons from level~$0$ through~$(\ell-1)$.
As a result, the iteration is guaranteed to terminate after~$\log (2\beta\log m)=\log\log m+6$ levels.

\begin{figure}[t]
    \centering
    \tikzset{every picture/.style={line width=0.75pt}} %set default line width to 0.75pt        

\begin{tikzpicture}[x=0.75pt,y=0.75pt,yscale=-1,xscale=1]
%uncomment if require: \path (0,325); %set diagram left start at 0, and has height of 325

%Shape: Rectangle [id:dp5359483342336775] 
\draw   (11,56) -- (163,56) -- (163,80) -- (11,80) -- cycle ;
%Straight Lines [id:da7884608796708337] 
\draw    (171,72) -- (288,72) ;
\draw [shift={(291,72)}, rotate = 180] [fill={rgb, 255:red, 0; green, 0; blue, 0 }  ][line width=0.08]  [draw opacity=0] (8.93,-4.29) -- (0,0) -- (8.93,4.29) -- cycle    ;
%Shape: Rectangle [id:dp048921133167558084] 
\draw   (380,56) -- (420,56) -- (420,80) -- (380,80) -- cycle ;
%Shape: Rectangle [id:dp3871765386896573] 
\draw   (300,56) -- (380,56) -- (380,80) -- (300,80) -- cycle ;
%Straight Lines [id:da8579560715854762] 
\draw    (300,88) -- (380,88) ;
\draw [shift={(380,88)}, rotate = 180] [color={rgb, 255:red, 0; green, 0; blue, 0 }  ][line width=0.75]    (0,3.35) -- (0,-3.35)(6.56,-1.97) .. controls (4.17,-0.84) and (1.99,-0.18) .. (0,0) .. controls (1.99,0.18) and (4.17,0.84) .. (6.56,1.97)   ;
\draw [shift={(300,88)}, rotate = 0] [color={rgb, 255:red, 0; green, 0; blue, 0 }  ][line width=0.75]    (0,3.35) -- (0,-3.35)(6.56,-1.97) .. controls (4.17,-0.84) and (1.99,-0.18) .. (0,0) .. controls (1.99,0.18) and (4.17,0.84) .. (6.56,1.97)   ;
%Straight Lines [id:da20380573382739708] 
\draw    (380,88) -- (420,88) ;
\draw [shift={(420,88)}, rotate = 180] [color={rgb, 255:red, 0; green, 0; blue, 0 }  ][line width=0.75]    (0,3.35) -- (0,-3.35)(6.56,-1.97) .. controls (4.17,-0.84) and (1.99,-0.18) .. (0,0) .. controls (1.99,0.18) and (4.17,0.84) .. (6.56,1.97)   ;
\draw [shift={(380,88)}, rotate = 0] [color={rgb, 255:red, 0; green, 0; blue, 0 }  ][line width=0.75]    (0,3.35) -- (0,-3.35)(6.56,-1.97) .. controls (4.17,-0.84) and (1.99,-0.18) .. (0,0) .. controls (1.99,0.18) and (4.17,0.84) .. (6.56,1.97)   ;
%Shape: Rectangle [id:dp8724895277618097] 
\draw   (500,56) -- (540,56) -- (540,80) -- (500,80) -- cycle ;
%Shape: Rectangle [id:dp6637602616053275] 
\draw   (420,56) -- (500,56) -- (500,80) -- (420,80) -- cycle ;
%Straight Lines [id:da4226495151045899] 
\draw    (420,88) -- (500,88) ;
\draw [shift={(500,88)}, rotate = 180] [color={rgb, 255:red, 0; green, 0; blue, 0 }  ][line width=0.75]    (0,3.35) -- (0,-3.35)(6.56,-1.97) .. controls (4.17,-0.84) and (1.99,-0.18) .. (0,0) .. controls (1.99,0.18) and (4.17,0.84) .. (6.56,1.97)   ;
\draw [shift={(420,88)}, rotate = 0] [color={rgb, 255:red, 0; green, 0; blue, 0 }  ][line width=0.75]    (0,3.35) -- (0,-3.35)(6.56,-1.97) .. controls (4.17,-0.84) and (1.99,-0.18) .. (0,0) .. controls (1.99,0.18) and (4.17,0.84) .. (6.56,1.97)   ;
%Straight Lines [id:da94071581449738] 
\draw    (500,88) -- (540,88) ;
\draw [shift={(540,88)}, rotate = 180] [color={rgb, 255:red, 0; green, 0; blue, 0 }  ][line width=0.75]    (0,3.35) -- (0,-3.35)(6.56,-1.97) .. controls (4.17,-0.84) and (1.99,-0.18) .. (0,0) .. controls (1.99,0.18) and (4.17,0.84) .. (6.56,1.97)   ;
\draw [shift={(500,88)}, rotate = 0] [color={rgb, 255:red, 0; green, 0; blue, 0 }  ][line width=0.75]    (0,3.35) -- (0,-3.35)(6.56,-1.97) .. controls (4.17,-0.84) and (1.99,-0.18) .. (0,0) .. controls (1.99,0.18) and (4.17,0.84) .. (6.56,1.97)   ;
%Shape: Rectangle [id:dp2414617597730122] 
\draw   (620,56) -- (660,56) -- (660,80) -- (620,80) -- cycle ;
%Shape: Rectangle [id:dp6146847484526492] 
\draw   (540,56) -- (620,56) -- (620,80) -- (540,80) -- cycle ;
%Straight Lines [id:da75049202175413] 
\draw    (540,88) -- (620,88) ;
\draw [shift={(620,88)}, rotate = 180] [color={rgb, 255:red, 0; green, 0; blue, 0 }  ][line width=0.75]    (0,3.35) -- (0,-3.35)(6.56,-1.97) .. controls (4.17,-0.84) and (1.99,-0.18) .. (0,0) .. controls (1.99,0.18) and (4.17,0.84) .. (6.56,1.97)   ;
\draw [shift={(540,88)}, rotate = 0] [color={rgb, 255:red, 0; green, 0; blue, 0 }  ][line width=0.75]    (0,3.35) -- (0,-3.35)(6.56,-1.97) .. controls (4.17,-0.84) and (1.99,-0.18) .. (0,0) .. controls (1.99,0.18) and (4.17,0.84) .. (6.56,1.97)   ;
%Straight Lines [id:da09604695715349032] 
\draw    (620,88) -- (660,88) ;
\draw [shift={(660,88)}, rotate = 180] [color={rgb, 255:red, 0; green, 0; blue, 0 }  ][line width=0.75]    (0,3.35) -- (0,-3.35)(6.56,-1.97) .. controls (4.17,-0.84) and (1.99,-0.18) .. (0,0) .. controls (1.99,0.18) and (4.17,0.84) .. (6.56,1.97)   ;
\draw [shift={(620,88)}, rotate = 0] [color={rgb, 255:red, 0; green, 0; blue, 0 }  ][line width=0.75]    (0,3.35) -- (0,-3.35)(6.56,-1.97) .. controls (4.17,-0.84) and (1.99,-0.18) .. (0,0) .. controls (1.99,0.18) and (4.17,0.84) .. (6.56,1.97)   ;
%Shape: Rectangle [id:dp40916207861434306] 
\draw   (660,56) -- (684,56) -- (684,80) -- (660,80) -- cycle ;

% Text Node
\draw (233,61.5) node   [align=left] {\begin{minipage}[lt]{74.8pt}\setlength\topsep{0pt}
Instrumentation
\end{minipage}};
% Text Node
\draw (339.5,68) node   [align=left] {\begin{minipage}[lt]{55.08pt}\setlength\topsep{0pt}
\begin{center}
$\displaystyle \mathbf{m}_{l}$
\end{center}

\end{minipage}};
% Text Node
\draw (92,68) node   [align=left] {\begin{minipage}[lt]{103.36pt}\setlength\topsep{0pt}
$\displaystyle \mathbf{u}_{l} =\mathbf{u}_{l,1} \circ $$\displaystyle \mathbf{u}_{l,2} \circ $$\displaystyle \mathbf{u}_{l,3} \circ \cdots $
\end{minipage}};
% Text Node
\draw (340,104) node   [align=left] {\begin{minipage}[lt]{54.4pt}\setlength\topsep{0pt}
\begin{center}
{\scriptsize $\displaystyle c\log m$}
\end{center}

\end{minipage}};
% Text Node
\draw (399,104) node   [align=left] {\begin{minipage}[lt]{27.2pt}\setlength\topsep{0pt}
\begin{center}
{\scriptsize $\displaystyle \frac{c\log m}{2}$}
\end{center}

\end{minipage}};
% Text Node
\draw (672,68) node   [align=left] {\begin{minipage}[lt]{21.76pt}\setlength\topsep{0pt}
\begin{center}
$\displaystyle \cdots $
\end{center}

\end{minipage}};
% Text Node
\draw (520,68) node   [align=left] {\begin{minipage}[lt]{27.2pt}\setlength\topsep{0pt}
\begin{center}
$\displaystyle \mathbf{u}_{l,1}$
\end{center}

\end{minipage}};
% Text Node
\draw (460,68) node   [align=left] {\begin{minipage}[lt]{54.4pt}\setlength\topsep{0pt}
\begin{center}
$\displaystyle \mathbf{m}_{l}$
\end{center}

\end{minipage}};
% Text Node
\draw (460,104) node   [align=left] {\begin{minipage}[lt]{54.4pt}\setlength\topsep{0pt}
\begin{center}
{\scriptsize $\displaystyle c\log m$}
\end{center}

\end{minipage}};
% Text Node
\draw (519,104) node   [align=left] {\begin{minipage}[lt]{27.2pt}\setlength\topsep{0pt}
\begin{center}
{\scriptsize $\displaystyle \frac{c\log m}{2}$}
\end{center}

\end{minipage}};
% Text Node
\draw (640,68) node   [align=left] {\begin{minipage}[lt]{27.2pt}\setlength\topsep{0pt}
\begin{center}
$\displaystyle \mathbf{u}_{l,1}$
\end{center}

\end{minipage}};
% Text Node
\draw (580,68) node   [align=left] {\begin{minipage}[lt]{54.4pt}\setlength\topsep{0pt}
\begin{center}
$\displaystyle \mathbf{m}_{l}$
\end{center}

\end{minipage}};
% Text Node
\draw (579,104) node   [align=left] {\begin{minipage}[lt]{54.4pt}\setlength\topsep{0pt}
\begin{center}
{\scriptsize $\displaystyle c\log m$}
\end{center}

\end{minipage}};
% Text Node
\draw (639,104) node   [align=left] {\begin{minipage}[lt]{27.2pt}\setlength\topsep{0pt}
\begin{center}
{\scriptsize $\displaystyle \frac{c\log m}{2}$}
\end{center}

\end{minipage}};
% Text Node
\draw (400,68) node   [align=left] {\begin{minipage}[lt]{27.2pt}\setlength\topsep{0pt}
\begin{center}
$\displaystyle \mathbf{u}_{l,1}$
\end{center}

\end{minipage}};

\end{tikzpicture}
    \caption{The redundancy string~$\bfu_l$ is instrumented with marker~$\bfm_l$. 
    First,~$\bfu_l$ is segmented into intervals of length~$c\log m/2$.
    Then, marker~$\bfm_l$ is inserted before each interval.}
    \label{fig:Instrumentation}
\end{figure}

\subsubsection{\textbf{Redundancy for residuals}}
Finally, the residuals are encoded similar to the beacons,
producing~$3t$ redundant symbols~$\bft_1,\ldots,\bft_{3t}$ (line~\ref{line:encodeResiduals}), and appended to the redundancy strings~$\bfu_1,\ldots,\bfu_{t}$ (line~\ref{line:appendResidualRedundancy}). 
Now, each redundancy string~$\bfu_l$ is of length~$(6+2\log\log m)\cdot c\log m$.
It is then instrumented in between every interval of~${c\log m}/{2}$ bits using the respective marker~$\bfm_l$, making it~$(6+2\log\log m)\cdot 3c\log m$ bits long (see Fig.~\ref{fig:Instrumentation}).
The redundancy strings are attached one by one to the left of the information region~$\bfy$ to create the final codeword~$\bfc$ (line~\ref{line:finalAssemblyStart}--\ref{line:finalAssemblyEnd}) of length
\begin{equation*}
    n\triangleq\vert\bfc\vert =m+ (6+2\log\log m)\cdot 3c\log m\cdot t +c\log m.
\end{equation*}

\begin{algorithm}\caption{Decoding, Part 1}\label{alg:Decode1}
%\begin{multicols}{2}
\begin{algorithmic}[1]
\Statex \textbf{Input:}~{A multiset $\texttt{FRAGMENTS}$ of at most~$t+1$ (unordered) fragments of some codeword~$\bfc\in\cC$.} 
\Statex \textbf{Output:}~{The binary string~$\bfz$ such that Algorithm~\ref{alg:Encoding} with input~$\bfz$ yields $\bfc$.}
\Statex \textcolor{red}{$\triangleright~\textsc{classification of fragments}$}
\If{there exists $\bff\in\texttt{FRAGMENTS}$ and index~$i$ such that~$\bfm_0=\bff[i:i+c\log m-1]$}
    \State Remove~$\bff$, then add $\bff[0:i-1]$ and $\bff[i:]$ to \texttt{FRAGMENTS}.\label{line:transpoint}
\EndIf
\State Let~\texttt{R-FRAGMENTS} be the set of fragments that contains~$\bfm_l$ for some~$l\in[0,t]$ (fragments in the redundancy region).\label{line:R-FRAGMENTS}
\State Let~\texttt{Z-FRAGMENTS} be the remaining fragments that are either at least~$3c\log m$ bits, or contain at least one discernible level-$0$ beacon, i.e., a codeword of~$\cC_\text{MU}$ (fragments in the information region). 
\label{line:Z-FRAGMENTS}
\Statex \textcolor{red}{$\triangleright~\textsc{extract level-$0$ beacons and redundancy strings}$}
\State Let~$\bfA'=[A'_{a,b}]\in\bbN^{\vert\cC_\text{MU}\vert\times \vert\cC_\text{MU}\vert}$ be an all-$0$ matrix.
\ForAll{$\bff$ in \texttt{Z-FRAGMENTS}}\label{line:ZFRAGMENTBEGIN}
    \ForAll{level-$0$ beacons with index~$i\in[|\bff|]$ in~$\bff$ in ascending order}
        \If{$i_\text{next}$ exists, defined as the smallest index of a level-$0$ beacon that is greater than~$i$}
            \State Let~$\bfc_a\triangleq \bff[i,i+c\log m-1]$ and $\bfc_b\triangleq\bff[i_\text{next},i_\text{next}+c\log m-1]$, where and~$\bfc_a,\bfc_b\in\cC_\text{MU}$.
            \State $A'_{a,b}\gets i_\text{next}-i-c\log m$.
        \EndIf
    \EndFor\label{line:ZFRAGMENTEND}
\EndFor

\State Let~$\texttt{R-STRINGS}$ be an empty KV store.
\ForAll{$\bff$ in \texttt{R-FRAGMENTS}}\label{line:RFRAGMENTBEGIN} 
    \ForAll{$\bfu'_l$, defined as consecutive length-$c\log m/2$ intervals separated by~$2\cdot(6+2\log\log m)$ occurrence of~$\bfm_l$s}
        \State $\texttt{R-STRINGS}[l]\gets\bfu'_l.\textsc{remove}(\bfm_l)$\label{line:RFRAGMENTEND} \Comment{Remove the~$\bfm_l$'s instrumented in line~\ref{line:finalAssemblyEnd}, Alg.~\ref{alg:Encoding}}
    \EndFor
\EndFor

\Statex \textcolor{red}{$\triangleright~\textsc{recovery of level-$0$ beacons}$}
\State $\texttt{APPROX-COMP-A}\gets \textsc{compress-adjacency-matrix}(\bfA')$
\State $\texttt{COMP-A}\gets\textsc{repair-adj-matrix}(\texttt{APPROX-COMP-A},\texttt{R-STRINGS})$. \label{line:repairHisto}
\State $\bfA\gets \textsc{decompress-adjacency-matrix}(\texttt{COMP-A})$
\State \label{line:allocateyp}Let~$\bfy'\gets \bfm_0\circ*^{|\bfz_1|}\circ\bfs_1\circ*^{|\bfz_2|}\circ\bfs_2\circ\cdots\circ\bfs_{r}\circ*^{|\bfz_{r+1}|}$\Comment{$\bfz_1,\bfz_2,\ldots,\bfz_{r+1}$ are defined in Lemma~\ref{lemma:speratedIntervals}}
% \State Let $\bfy'\gets \bfm_0\circ*^m,\texttt{cursor}\gets 1$ \label{line:allocateZ}
% \State Let~$a\in[\vert\cC_\text{MU}\vert]$ be the index of~$\bfm_0$ in~$\cC_\text{MU}$, and let~$b\in[\vert\cC_\text{MU}\vert]$ be the index of the unique nonzero element in row~$a$ of~$\bfA$.
% \While{$b\neq\texttt{empty}$}\label{line:affixl01}
%     \State $\texttt{cursor}\gets\texttt{cursor}+A_{a,b}$,~$\bfy'[\texttt{cursor}:\texttt{cursor}+c\log m-1]\gets \bfc_b$
%     \State $a\gets b$, and let~$b\in[\vert\cC_\text{MU}\vert]$ be index the unique nonzero element in row~$a$ of~$\bfA$ if exists, and~\texttt{empty} otherwise.\label{line:affixl02}
% \EndWhile 
\end{algorithmic}
%\end{multicols}
\end{algorithm}
\subsection{Decoding}\label{section:decoding}
A procedure for recovering the correct legit string~$\bfz$ from at most~$t+1$ fragments of the respective codeword~$\bfc$ is presented in Algorithm~\ref{alg:Decode1} and Algorithm~\ref{alg:Decode2}.
We now detail the procedure and establish its correctness.

\subsubsection{\textbf{Classification of fragments}}
Recall that a codeword~$\bfc$ consists of an information region and a redundancy region; the former is composed of~$\bfm_0$ followed by the legit string~$\bfz$, and the latter includes~$t$ instrumented redundancy string~$\bfu_1,\ldots,\bfu_t$ (see Figure~\ref{fig:Instrumentation}).
The decoder begins with classifying the fragments into two regions, and it first attempts to distinguish the transition point, i.e.~$\bfm_0$, from all fragments.

If a fragment containing~$\bfm_0$ is found, it is broken at~$\bfm_0$ such that the two segments belong to different regions (line~\ref{line:transpoint}).
Then, the decoder classifies the fragments into those belonging to the redundancy region (line~\ref{line:R-FRAGMENTS}), and those belonging to the information region (line~\ref{line:Z-FRAGMENTS}).
The purpose of this partition is to separate the treatments of these fragments---the information fragments need to be analyzed in order to extract an approximate adjacency matrix, whereas the redundancy fragments need to be analyzed in order to extract the Reed-Solomon redundancy symbols required for error correction. 
Such classification guarantees the following.
\begin{lemma}\label{lemma:insideInfoRegion}
    Every fragment~$\bff\in\texttt{Z-FRAGMENTS}$ is entirely contained in the information region of~$\bfc$.
\end{lemma}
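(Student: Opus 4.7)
The plan is a proof by contradiction. Suppose some $\bff \in \texttt{Z-FRAGMENTS}$ is not wholly contained in the information region of $\bfc$. I would proceed in two conceptual steps: first, rule out that $\bff$ straddles the boundary between the redundancy region and the information region; second, rule out that $\bff$ lies wholly in the redundancy region. Both steps lean on the structure of the redundancy region---an alternating concatenation of markers (each of length $c\log m$, drawn from $\cC_\text{MU}$) and intervals (each of length $c\log m/2$), whose consecutive marker starts are therefore $3c\log m/2$ bits apart---together with the mutual uncorrelation (MU) property of $\cC_\text{MU}$.

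For the boundary-straddling step, the information region starts with $\bfm_0$, so any fragment crossing the boundary must contain at least a prefix of $\bfm_0$ at its right end. If that prefix is all of $\bfm_0$, line~\ref{line:transpoint} splits the fragment at $\bfm_0$; the right sub-fragment begins with $\bfm_0$ and is classified into $\texttt{R-FRAGMENTS}$, while the left sub-fragment lies wholly in the redundancy region (reducing to step two). If instead $\bff$ covers only a proper prefix of $\bfm_0$ at the end, I would argue $\bff \notin \texttt{Z-FRAGMENTS}$: any contained level-$0$ signature $\bfc \in \cC_\text{MU}$ either lies wholly in the redundancy portion of $\bff$ (handled in step two) or overlaps $\bfm_0$ at the end, in which case a nontrivial suffix of $\bfc$ would coincide with a nontrivial prefix of $\bfm_0$, contradicting MU. For the length criterion, note that the portion of $\bff$ in the redundancy region has length more than $2c\log m$, which already falls into step two's length sub-case.

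For the wholly-in-redundancy-region step, assume $\bff$ lies entirely in the redundancy region and contains no full marker $\bfm_l$ (otherwise $\bff \in \texttt{R-FRAGMENTS}$). I would split along the two Z-criteria. Sub-case (a): if $|\bff| \geq 3c\log m$, I would show that the longest marker-free window inside the redundancy region has length at most $2(c\log m-1) + c\log m/2 = 5c\log m/2 - 2 < 3c\log m$, by starting the window one bit past the start of a marker and extending it as far as possible without reaching the end of the next marker; hence $\bff$ must contain a full marker, a contradiction. Sub-case (b): if $\bff$ contains a level-$0$ signature $\bfc \in \cC_\text{MU}$ of length $c\log m$, then because intervals have length $c\log m/2 < c\log m$, the signature must overlap at least one marker. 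A position analysis shows that if $\bfc$ starts strictly inside some marker, a proper suffix of that marker equals a prefix of $\bfc$---contradicting MU; if $\bfc$ starts inside an interval and extends into the following marker, a proper prefix of that marker equals a suffix of $\bfc$---again contradicting MU; the only remaining possibility is that $\bfc$ begins at the exact start of a marker, forcing $\bfc = \bfm_l$, which puts $\bff$ in $\texttt{R-FRAGMENTS}$, a contradiction.

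The main obstacle I anticipate is the exhaustive case analysis in sub-case (b), ensuring that \emph{every} overlap pattern---including those straddling the interface between $\bfu_l$ and $\bfu_{l-1}$ where two markers of different indices $\bfm_l,\bfm_{l-1}$ meet, and the interface between $\bfu_1$ and $\bfm_0$---is ruled out by the MU property applied to the appropriate pair of $\cC_\text{MU}$ codewords. The marker-spacing/length arithmetic in sub-case (a) is comparatively routine once the alternating structure is recorded.
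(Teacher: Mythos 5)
Your plan follows the same line as the paper: proceed by contradiction, split along the $\texttt{Z-FRAGMENTS}$ membership criterion (length at least $3c\log m$, or presence of a discernible $\cC_\text{MU}$ codeword), and contradict via the marker $\bfm_0$, the periodic instrumentation of the redundancy region, and the MU property. Your explicit MU position analysis, ruling out $\cC_\text{MU}$ codewords that straddle the boundary or sit inside the redundancy region without being markers, is welcome; the paper's proof essentially takes this for granted when it moves directly from ``$\bff$ contains a discernible level-$0$ signature'' to ``$\bff$ must contain $\bfm_0$.''

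There is, however, one arithmetic step in your proposal that would fail as written. In the boundary-straddling case with the length criterion, you observe that the portion of $\bff$ inside the redundancy region has length more than $2c\log m$ and then conclude that this ``already falls into step two's length sub-case.'' Step two's length sub-case rests on the bound that the longest marker-free window anywhere in the redundancy region has length at most $5c\log m/2-2$, which is why length at least $3c\log m$ guarantees a full marker. But $2c\log m+1 \le 5c\log m/2-2$ whenever $c\log m\ge 6$, so a generic window of $2c\log m+1$ bits of the redundancy region need \emph{not} contain a full marker, and citing that generic bound does not close the sub-case. What rescues the conclusion is an anchoring argument you did not state: here the window's right edge is the last bit of the redundancy region, and the final marker of the instrumentation lies entirely within the last $3c\log m/2$ bits of that region, so any anchored window of more than $3c\log m/2$ bits — in particular one of $2c\log m+1$ bits — covers that marker in full, forcing $\bff$ into $\texttt{R-FRAGMENTS}$. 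You need to make this anchoring explicit; the generic $5c\log m/2-2$ bound alone does not suffice.
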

\begin{proof}
    Assume otherwise, i.e., that there exists a fragment~$\bff\in\texttt{Z-FRAGMENTS}$ that intersects nontrivially with the redundancy region of the codeword~$\bfc$.
    Such~$\bff$ must not contain any of~$\bfm_1,\ldots,\bfm_t$, or it would have been stored in~\texttt{R-FRAGMENTS} (line~\ref{line:R-FRAGMENTS}, Alg.~\ref{alg:Decode1}). 
    Hence, it may reside in~\texttt{Z-FRAGMENTS} due to exactly one of the following reasons:
    \begin{enumerate}
        \item $\bff$ contains a discernible level-$0$ beacon.
        \item $\bff$ does not contain a discernible level-$0$ beacon, but $\vert\bff\vert \geq 3c\log m$.
    \end{enumerate}
    We proceed to show that such~$\bff$ does not exist for either reason.    
    Consider the first reason, i.e., $\bff$ contains a discernible level-$0$ beacon.
    In this case, since level-$0$ beacons are entirely contained in the information region, it follows that~$\bff$ must contain~$\bfm_0$.
    However, line~\ref{line:transpoint} assures that for such a fragment,~$\bfm_0=\bff[1:c\log m]$, and as a result,~$\bff$ does not intersect with the redundancy region, a contradiction.

    We proceed to the second reason.
    In this case, such a fragment~$\bff$ must intersect with both regions, as we show by contradiction as follows.
    Recall that the redundancy region is created by instrumenting the redundancy strings for every interval of length~$c\log m/2$ (see Fig.~\ref{fig:Instrumentation}).
    If~$\bff$ is entirely contained in the redundancy region, then~$\bff$ must contain~$\bfm_l$ for some~$l\in[t]$ due to the fact that~$\vert\bff\vert\geq 3c\log m$.
    A contradiction.

    However, such an~$\bff$ that intersect with both regions cannot exist.
    Let the information region of~$\bfc$ be~$\bfc[i_\text{trans}:]$, where~$\bfc[i_\text{trans}:i_\text{trans}+c\log m-1]=\bfm_0$ and~$\bfc[i_\text{trans}+c\log m:]=\bfz$.
    Then, since~$\bff$ intersects both regions, it follows that~$\bff=\bfc[i:j]$ for some~$i,j$ where~$i<i_\text{trans}$ and~$j\geq i_\text{trans}$.
    If~$j<i_\text{trans}+c\log m-1$, then since~$|\bff|\ge3c\log m$, it follows that~$i$ must be less than~$i_\text{trans}-2c\log m$, and hence must contain a copy of the marker~$\bfm_t$, a contradiction by line~\ref{line:R-FRAGMENTS}.
    If~$j\geq i_\text{trans}+c\log m-1$, then~$\bff$ contains~$\bfm_0$, and a contradiction ensues due to line~\ref{line:transpoint}.
    % since line~\ref{line:transpoint} assures that~$\bff$ does not intersect with the redundancy region.
\end{proof}

\begin{lemma}\label{lemma:atLeastOneSig}
    Every fragment~$\bff\in\texttt{Z-FRAGMENTS}$ contains at least one beacon in some level.
\end{lemma}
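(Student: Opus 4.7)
The plan is to split along the two reasons a fragment can belong to $\texttt{Z-FRAGMENTS}$ per line~\ref{line:Z-FRAGMENTS}. If $\bff$ already contains a discernible level-$0$ signature, the claim is immediate. So the substantive case is when $\bff$ has no discernible level-$0$ signature but $|\bff|\ge 3c\log m$; here I would use Lemma~\ref{lemma:insideInfoRegion} to restrict attention to a fragment lying entirely inside the information region $\bfm_0\circ\bfz$ of $\bfc$.

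The main step is to establish a gap bound on the positions of signatures (of all levels) inside the information region. Specifically, I would argue that at the termination of the encoding recursion, the start positions $p_1<p_2<\cdots<p_r$ of all signatures (levels $0$ through $\log\log m+6$) satisfy $p_{i+1}-p_i<2c\log m$ for every $i$. This is the direct contrapositive of the branching condition $k_\text{next}-k\ge 2c\log m$ on line~\ref{line:noFit}: whenever two consecutive signatures at some level are $\ge 2c\log m$ apart, a new signature is inserted at their midpoint, shrinking the gap. The recursion therefore cannot terminate while any such gap remains, and by the argument following line~\ref{line:encodeSignatures} it does terminate after $\log\log m+6$ levels. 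The starting point $\bfm_0$ of the information region is itself a level-$0$ signature, so $p_1$ sits at the very beginning; an analogous observation handles the tail since the last key processed in the loop uses $k_\text{next}=m+1$.

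Given this gap bound, consider a fragment $\bff=\bfy'[q:q+L-1]$ lying in the information region with $L\ge 3c\log m$. Let $p_j$ be the smallest signature start with $p_j\ge q$; if $j>1$ then $p_{j-1}<q$, whence $p_j<p_{j-1}+2c\log m<q+2c\log m$, and if $j=1$ then trivially $p_j\le q+c\log m$ since $p_1$ marks the start of the information region. In either case
\begin{equation*}
q\le p_j\le q+2c\log m\le q+L-c\log m,
\end{equation*}
so the entire length-$c\log m$ signature starting at $p_j$ is contained in $\bff$, proving the claim.

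The only delicate point I anticipate is handling the boundary cases cleanly—namely verifying that the ``leftmost signature'' argument works for fragments beginning right at the start of the information region, and that the recursion termination is tight enough to cover fragments that straddle residual regions at the tail. Both should follow from the fact that $\bfm_0$ anchors the leftmost position and that residual bits, by line~\ref{line:noFit}, are by construction sandwiched between two signatures at start-distance strictly less than $2c\log m$, so no fragment of length $3c\log m$ can lie wholly within a residual.
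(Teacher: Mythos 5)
Your proof is correct and follows essentially the same approach as the paper's: restrict to the information region via Lemma~\ref{lemma:insideInfoRegion}, dispense with the case where a level-$0$ signature is present, and then use the short-gap structure of the signature/residual sequence to force a full signature into any window of length $3c\log m$. Where the paper simply states that residuals have length at most $c\log m-1$ (equivalent to your gap bound $p_{i+1}-p_i<2c\log m$), you re-derive this bound from the termination of the midpoint-insertion recursion; this is a reasonable elaboration rather than a genuinely different route, and the minor $\le$/$<$ slips in your final chain of inequalities do not affect correctness.
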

\begin{proof}
    By Lemma~\ref{lemma:insideInfoRegion}, every fragment~$\bff\in\texttt{Z-FRAGMENTS}$ is entirely contained in the information region of~$\bfc$.
    Recall that~$\bff$ either contains a discernible level-$0$ beacon, or is at least~$3c\log m$ bits long.
    Observe that beacons (of length~$c\log m$) are separated by residuals (of length at most~$c\log m-1$), and as a result,~$\bff$ must contain a beacon in the latter case (i.e.,~$\vert\bff\vert\geq 3c\log m$).
\end{proof}

\begin{lemma}\label{lemma:canBeFound}
    Every level-$0$ beacon, if not broken, can be found from fragments in the set~\texttt{Z-FRAGMENTS}.
\end{lemma}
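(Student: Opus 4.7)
The plan is to track an unbroken level-$0$ signature $\bfs$ through the classification steps of Algorithm~\ref{alg:Decode} and show that the fragment carrying it lands in \texttt{Z-FRAGMENTS}, so that the scan at lines~\ref{line:ZFRAGMENTBEGIN}--\ref{line:ZFRAGMENTEND} recovers it. First I would let $\bff$ denote the fragment (after the adversary's breaks) that contains $\bfs$ in its entirety. Two cases arise depending on whether $\bff$ crosses the transition between the redundancy region and the information region. If it does, then $\bff$ contains the unique occurrence of $\bfm_0$ in $\bfc$, and line~\ref{line:transpoint} splits $\bff$ at the position of $\bfm_0$; the right-hand piece $\bff'$ then begins with $\bfm_0$ and lies wholly inside the information region $\bfy=\bfm_0\circ\bfz$, and it is this piece that now contains $\bfs$ (since $\bfs$ is a substring of $\bfy$). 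Otherwise, $\bff$ lies either entirely in the redundancy region or entirely in the information region; since $\bfs$ is in $\bfy$, it must be the latter, and we set $\bff':=\bff$.

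Next I would argue $\bff'\notin\texttt{R-FRAGMENTS}$. Property~\ref{item:no-markers} ensures that $\bfz$ contains none of $\bfm_0,\bfm_1,\ldots,\bfm_t$, so the only occurrence of any marker in $\bfy$ is $\bfm_0$ at its prefix, and in particular $\bff'$ contains no $\bfm_l$ with $l\in\{1,\ldots,t\}$. Since line~\ref{line:R-FRAGMENTS} routes fragments to \texttt{R-FRAGMENTS} based on the presence of redundancy-region markers (the case $l=0$ having been absorbed by line~\ref{line:transpoint}), the fragment $\bff'$ avoids that bin. It clearly meets the criterion of line~\ref{line:Z-FRAGMENTS}, because it contains the codeword $\bfs\in\cC_\text{MU}$, so it is placed in \texttt{Z-FRAGMENTS}.

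Finally I would verify that $\bfs$ is actually discovered by the scan at lines~\ref{line:ZFRAGMENTBEGIN}--\ref{line:ZFRAGMENTEND}: since $\bfs$ appears as a length-$c\log m$ substring of $\bff'$ that matches an element of $\cC_\text{MU}$, it is picked up at its correct position, and the mutually uncorrelated property of $\cC_\text{MU}$ rules out any misaligned detection that could overlap with or shadow $\bfs$. The only real subtlety is the transition-point behavior captured by line~\ref{line:transpoint}; once that is disposed of, the lemma follows immediately from Property~\ref{item:no-markers} together with the classification rules of lines~\ref{line:R-FRAGMENTS}--\ref{line:Z-FRAGMENTS}, so I do not anticipate a significant technical obstacle beyond a careful enumeration of the two geometric cases above.
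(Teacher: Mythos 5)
Your argument is correct and matches the paper's reasoning in substance: the paper phrases it contrapositively (a signature missing from \texttt{Z-FRAGMENTS} would sit in an \texttt{R-FRAGMENTS} fragment, which must straddle the boundary, contain $\bfm_0$, and hence have been split at line~\ref{line:transpoint}), whereas you run the same argument forward by tracking the fragment holding the signature through lines~\ref{line:transpoint}--\ref{line:Z-FRAGMENTS}. The ingredients — Property~\ref{item:no-markers} keeping markers out of $\bfz$, and line~\ref{line:transpoint} handling the boundary-crossing fragment containing $\bfm_0$ — are the same, so this is a stylistic reformulation rather than a different route.
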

\begin{proof}
    Notice that the decoding  algorithm first puts every fragment that contain a marker in~\texttt{R-FRAGMENTS} (line~\ref{line:R-FRAGMENTS}).
    The remaining fragments, as long as they contain a discernible level-$0$ beacon, will be stored in~\texttt{Z-FRAGMENTS} (line~\ref{line:Z-FRAGMENTS}).
    Therefore, if an unbroken level-$0$ beacon is missing from~\texttt{Z-FRAGMENTS}, it is contained in a fragment that contains a marker and was therefore placed in~\texttt{R-FRAGMENTS}.
    Such a fragment must cross~$\bfz$ and the redundancy region, and as result, must contain~$\bfm_0$. 
    However, line~\ref{line:transpoint} assures that it will be segmented and the right part, which contains the beacon, will be placed in~\texttt{Z-FRAGMENTS}.  
\end{proof}

The three preceding lemmas form the basis of the decoding process that described next.

\subsubsection{\textbf{Recovery of level-0 beacons}}
The decoding proceeds to the recovery of level-$0$ beacons, including their identities and positions.
A preliminary analysis of the~$\bfz$-fragments, whose purpose is to extract the surviving level-$0$ beacons
into an approximate adjacency matrix~$\bfA'$, is given in lines~\ref{line:ZFRAGMENTBEGIN}-\ref{line:ZFRAGMENTEND}.
In this analysis, all codewords of~$\cC_\text{MU}$ present in the fragments are located and identified as level-$0$ beacons, and the collection of all level-$0$ beacons is coalesced into a pair-wise ordering in the form of an \emph{approximate} adjacency matrix~$\bfA'$. 
Note that the number of errors in~$\bfA'$ relative to~$\bfA$ is bounded by the following lemma.

\begin{lemma}\label{lemma:boundErrors}
    Let~$t_1$ be the number of breaks that fall in the information region of the codeword, and let~$\bfA$ be the adjacency matrix of all level-$0$ beacons in~$\bfz$.
    Then,~$\bfA'$ differs from~$\bfA$ at no more than~$2t_1$ positions.
\end{lemma}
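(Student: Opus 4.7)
The plan is to show that each of the $t_1$ information-region breaks can cause at most two entries of $\bfA'$ to differ from the corresponding entries of $\bfA$, so the total number of differences is at most $2t_1$.

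First I would establish that every non-zero entry of $\bfA'$ coincides in both location and value with the corresponding entry of $\bfA$, so all discrepancies must come from entries that are non-zero in $\bfA$ but zero in $\bfA'$. Suppose the decoder identifies $\bfs_a$ followed by $\bfs_b$ as consecutive discernible level-$0$ signatures inside some $\bff\in\texttt{Z-FRAGMENTS}$. By Lemma~\ref{lemma:insideInfoRegion} together with line~\ref{line:transpoint}, $\bff$ is a contiguous substring of $\bfz$ (prefixed possibly by $\bfm_0$) with no internal breaks. Hence any level-$0$ signature of $\bfz$ strictly between $\bfs_a$ and $\bfs_b$ in $\bfz$ would lie wholly inside $\bff$ and be intact, and would therefore also be discernible inside $\bff$---contradicting the consecutivity of $\bfs_a$ and $\bfs_b$ in $\bff$. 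Thus $\bfs_a,\bfs_b$ are truly consecutive in $\bfz$, and since the distance $i_\text{next}-i$ recorded by the algorithm is measured by relative positions inside $\bff$, it equals the distance between $\bfs_a$ and $\bfs_b$ in $\bfz$, so $A'_{a,b}=A_{a,b}$. Property~\ref{item:distinct} together with the mutual uncorrelation of $\cC_\text{MU}$ further guarantees that every $\cC_\text{MU}$-codeword occurrence inside a fragment corresponds to a unique level-$0$ signature of $\bfz$, ruling out other sources of spurious non-zero entries.

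It then remains to bound the number of non-zero entries of $\bfA$ that are zeroed out in $\bfA'$. A pair of consecutive signatures $(\bfs_j,\bfs_{j+1})$ survives in $\bfA'$ precisely when both signatures are intact and lie in the same fragment, i.e., when no break falls in the closed interval $I_{j,j+1}$ extending from the first bit of $\bfs_j$ to the last bit of $\bfs_{j+1}$. I would charge each lost edge to any break that lies in its $I_{j,j+1}$, and then bound the contributions per break. If a break at position $p$ falls strictly between two consecutive signatures $\bfs_j$ and $\bfs_{j+1}$ (outside all signatures), then the only interval containing $p$ is $I_{j,j+1}$; if instead $p$ lies inside some signature $\bfs_k$, then the only intervals containing $p$ are $I_{k-1,k}$ and $I_{k,k+1}$. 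Either way, a single break is charged with at most two missing entries, and summing over the $t_1$ breaks in the information region yields the bound $2t_1$. The main subtlety is the ``no spurious entry'' step in the previous paragraph, which is where fragment contiguity and the uniqueness properties of $\bfz$ and $\cC_\text{MU}$ must all be used together; the counting step that follows is then a straightforward accounting.
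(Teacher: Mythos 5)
Your proof is correct, and it is both more explicit and more complete than the argument given in the paper.

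The paper's own proof is quite terse: it invokes Lemma~\ref{lemma:canBeFound} to say that a level-$0$ signature is missed only if a break falls \emph{inside} it, concludes that at most $t_1$ signatures are missed, and then observes that each missed signature affects two rows of $\bfA$. Taken literally, this accounting omits the case you handle separately: a break that falls \emph{strictly between} two intact consecutive signatures $\bfs_j,\bfs_{j+1}$, putting them in different fragments so that $A_{a,b}$ is nonzero but $A'_{a,b}=0$ even though neither signature is missed. Your interval decomposition $I_{j,j+1}$ (from the first bit of $\bfs_j$ to the last bit of $\bfs_{j+1}$) addresses this cleanly: a break strictly between signatures lies in exactly one interval and can kill only one edge, while a break inside a signature lies in two overlapping intervals and can kill two edges, so every break is charged at most twice. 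Your preliminary step---that $\bfA'$ contains no spurious nonzero entries because each $\texttt{Z-FRAGMENT}$ is a contiguous substring of the information region and any signature lying between $\bfs_a$ and $\bfs_b$ inside a fragment would itself be intact and discernible, together with Property~\ref{item:distinct} ruling out duplicate codeword occurrences---is also the right justification for why a differing row of $\bfA$ contributes only one differing position, a point the paper leaves implicit. In short: same high-level strategy (bound the number of ``lost edges'' and charge them to breaks), but your charging argument covers both break-placements and makes the ``no spurious entries'' step explicit, which closes a small gap in the paper's exposition.
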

\begin{proof}
    By Lemma~\ref{lemma:canBeFound}, the decoding algorithm may fail to identify a level-$0$ beacon in the information region~$\bfy$ if there exists a break inside it.
    Hence, the algorithm fails to identify at most~$t_1$ level-$0$ beacons.
    Notice that, failing to capture a level-$0$ beacon~$\bfs=\bfc_a$ (i.e., a codeword in~$\cC_\text{MU}$ with lexicographical order~$a$) affects exactly two rows of~$\bfA$.
    That is, row~$a$ and another row whose~$a$-th element is non-zero.
\end{proof}

The analysis of the redundancy fragments is conducted in lines \ref{line:RFRAGMENTBEGIN}-\ref{line:RFRAGMENTEND}, during which all markers~$\bfm_l$, if not broken, are identified.
Recall that each marker~$\bfm_l$ is inserted to the redundancy string~$\bfu_l$ between every interval of~$c\log m /2$ bits, and~$\vert\bfu_l\vert=(6+2\log\log m)\cdot  c\log m$.
Hence, the redundancy string~$\bfu_l$ can be identified by observing a series of~$2\cdot (6+2\log\log m)$ markers~$\bfm_l$, each separated by~$c\log m/2$ bits.
All extracted redundancy strings are placed in a KV-store \texttt{R-STRINGS}.

\begin{lemma}\label{lemma:enoughRedundancy}
    Let~$t_2$ be the number of breaks that fall in the redundancy region of the codeword.
    Then, the decoding algorithm is guaranteed to obtain~$t-t_2$ redundancy strings.
\end{lemma}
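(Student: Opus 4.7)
The plan is to argue that a break in the redundancy region can destroy the identifying pattern of at most one redundancy string, so $t_2$ breaks leave at least $t - t_2$ of the $t$ strings $\bfu_1, \ldots, \bfu_t$ fully recoverable.

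First, I would recall the geometric structure of the redundancy region produced by lines~\ref{line:finalAssemblyStart}--\ref{line:finalAssemblyEnd} of Algorithm~\ref{alg:Encoding}: it is the concatenation of the $t$ instrumented blocks corresponding to $\bfu_1, \ldots, \bfu_t$, each of total length $3c\log m \cdot (6 + 2\log\log m)$, so the redundancy region partitions into $t$ disjoint contiguous segments, one per index $l$. The instrumentation of the $l$-th block consists of exactly $2(6+2\log\log m)$ copies of $\bfm_l$, each separated by $c\log m / 2$ data bits, which is precisely the pattern the decoder searches for in lines~\ref{line:RFRAGMENTBEGIN}--\ref{line:RFRAGMENTEND}.

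Second, I would argue that each of the $t_2$ breaks falling in the redundancy region lies inside at most one of these $t$ segments, and therefore can disrupt the required marker pattern of at most one~$\bfu_l$. Consequently, at least $t - t_2$ segments survive entirely within a single fragment of \texttt{R-FRAGMENTS} (i.e., contain no internal break), and within such a fragment the full pattern of $2(6+2\log\log m)$ equally spaced copies of $\bfm_l$ is preserved verbatim.

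Third, I would verify that each such undisrupted segment is correctly identified by the decoder and no spurious match occurs: this requires $\bfm_l$ to not appear with the correct periodic spacing outside its intended block. The marker $\bfm_l$ is an MU codeword, so by Property~\ref{item:no-markers} of legitimacy it does not occur inside $\bfz$, hence does not appear in the information region; within the redundancy region it appears only as instrumentation of $\bfu_l$ itself by construction. The main subtlety, and the one point that needs care, is to rule out that the rigid periodic pattern of $\bfm_l$'s could appear by chance within the data portions of a different $\bfu_{l'}$; this is ensured by requiring that many ($2(6+2\log\log m)$) copies be found at the exact spacing of $c\log m/2$, which is structurally impossible once $l' \neq l$ since any internal occurrence of $\bfm_l$ inside the Reed--Solomon data of $\bfu_{l'}$ cannot be adjacent to a copy of $\bfm_l$ in $\bfu_{l'}$'s instrumentation. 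Hence every surviving segment is both identified and labelled correctly, giving at least $t - t_2$ correctly recovered redundancy strings, which completes the proof.
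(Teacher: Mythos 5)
Your overall skeleton (breaks are localized, each break spoils at most one of the $t$ instrumented blocks, so at least $t-t_2$ survive) matches the paper's proof. However, there is a genuine gap in your third step, the "no spurious match" claim. You assert that "within the redundancy region [$\bfm_l$] appears only as instrumentation of $\bfu_l$ itself by construction," but this is not true by construction: the data chunks interleaved between markers are Reed--Solomon parity bits and could, a priori, contain an accidental copy of $\bfm_l$. You try to patch this with a periodicity argument ("the rigid periodic pattern... is structurally impossible... since any internal occurrence of $\bfm_l$ inside the Reed--Solomon data of $\bfu_{l'}$ cannot be adjacent to a copy of $\bfm_l$"), but this argument is not rigorous: it doesn't bound the number of accidental occurrences and doesn't rule out patterns that line up by coincidence with the $c\log m/2$ period.

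The missing ingredient is the mutual-uncorrelatedness of $\cC_\text{MU}$ together with the dense spacing of inserted markers. Because markers are inserted every $c\log m/2$ bits, which is strictly less than the marker length $c\log m$, every length-$(c\log m)$ window in an instrumented block overlaps at least one inserted marker. If that window were itself a codeword of $\cC_\text{MU}$ (e.g., any $\bfm_l$) without being exactly aligned with an inserted marker, then a prefix of one $\cC_\text{MU}$ codeword would coincide with a suffix of another (or of itself), contradicting the MU property. This shows directly that the \emph{only} $\cC_\text{MU}$ codewords appearing in the redundancy region are the inserted markers themselves, which in block $l$ are all equal to $\bfm_l$; the periodic-spacing check then follows for free rather than carrying the burden of the uniqueness argument. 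Your proof outlines the right high-level plan but needs this MU-based step to close it.
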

\begin{proof}
    Recall that for every~$l\in[t]$, the redundancy string~$\bfu_l$ is instrumented in between of every two intervals of~$c\log m/2$ bits using the marker~$\bfm_l$.
    As a result of this instrumentation, every substring~$\bfs$ of length~$c\log m$ of a fragment in~\texttt{R-FRAGMENTS} must either be one of the~$\bfm_l$'s inserted for the sake of instrumentation, or not belong to~$\cC_\text{MU}$ altogether.
    Otherwise,~$\bfs$ overlaps a prefix or suffix of another codeword of~$\cC_\text{MU}$ (possibly itself), a contradiction.
    
    Therefore, every marker identified by the decoder is instrumented by the encoder, and does not contain any bits of redundancy strings (before instrumentation).
    As such, the decoder can obtain a redundancy string~$\bfu_l$ after identifying~$2\cdot(6+2\log\log m)$ consecutive occurrences of~$\bfm_l$.
    Since there are~$t$ (instrumented) redundancy strings, at least~$t-t_2$ out of them are intact, i.e., not broken, and are guaranteed to be found and stored in~\texttt{R-STRINGS}.
\end{proof}

The decoding algorithm proceeds to correct the constructed adjacency matrix~$\bfA'$ to~$\bfA$, i.e., the~\emph{correct} redundancy matrix generated in Algorithm~\ref{alg:Encoding} from~$\bfz$, using the collected redundancy strings (line~\ref{line:repairHisto}) and a standard Reed-Solomon decoder.
The success of the decoding process is guaranteed as follows.
\begin{theorem}\label{theorem:recoverAdjacencyMatrix}
    Line~\ref{line:repairHisto} outputs the correct adjacency matrix~$\bfA$.
\end{theorem}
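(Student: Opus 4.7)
The plan is to reduce this to the standard error-and-erasure decoding guarantee of the systematic Reed-Solomon code produced in line~\ref{line:encodeAdjacencyMatrix}, treating breaks in the information region as \emph{errors} in the systematic part and breaks in the redundancy region as \emph{erasures} in the parity part. Let $t_1$ be the number of breaks that fall inside the information region of $\bfc$ and $t_2$ the number that fall inside the redundancy region, so that $t_1 + t_2 \le t$.

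First I would bound the number of symbol errors in the systematic part. By Lemma~\ref{lemma:boundErrors}, at most $2t_1$ rows of $\bfA'$ differ from the corresponding rows of $\bfA$. Because \textsc{compress-adjacency-matrix} applies \textsc{compress-row} row by row and produces one field element per row, this yields at most $2t_1$ coordinates of \texttt{approx-compA} that disagree with \texttt{compA}. These must be treated as errors of unknown location, since the decoder has no way of telling a priori which rows of $\bfA'$ were corrupted by a missing neighbor signature.

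Next I would account for the surviving parity. Line~\ref{line:appendAdjacencyMatrixRedundancy} packs exactly four parity symbols $\bfd_{4l-3},\ldots,\bfd_{4l}$ into each $\bfu_l$, and Lemma~\ref{lemma:enoughRedundancy} guarantees that at least $t - t_2$ of the $\bfu_l$'s are extracted into \texttt{R-STRINGS}. Because the markers $\bfm_0,\ldots,\bfm_t$ are distinct codewords of the mutually uncorrelated code $\cC_\text{MU}$, the decoder unambiguously recognizes the index $l$ associated with each recovered redundancy string, so the missing $4t$ parity symbols are lost at \emph{known} locations and count as $f \le 4t_2$ erasures rather than additional errors.

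The final step is to invoke the standard fact that a systematic Reed-Solomon code with $4t$ parity symbols has minimum distance $4t+1$ and therefore corrects $e$ errors and $f$ erasures whenever $2e + f \le 4t$. Substituting the bounds above gives
\[
2e + f \;\le\; 2(2t_1) + 4t_2 \;=\; 4(t_1 + t_2) \;\le\; 4t,
\]
so \textsc{repair-adj-matrix} recovers the true \texttt{compA}, and because \textsc{decompress-adjacency-matrix} is the deterministic inverse of \textsc{compress-adjacency-matrix}, the algorithm returns the correct $\bfA$. The main obstacle, modulo the two preceding lemmas, is the bookkeeping that ensures each information-region break contributes at most two symbol errors with no silent side-effects elsewhere in \texttt{approx-compA}, and that each redundancy-region break contributes an erasure rather than a disguised error; the former relies on property~\ref{item:distinct} to rule out spurious $\cC_\text{MU}$ occurrences inside fragments, and the latter on the pairwise distinctness of the markers $\bfm_l$.
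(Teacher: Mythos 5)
Your proposal is correct and follows essentially the same route as the paper's proof: invoke Lemma~\ref{lemma:boundErrors} to bound the symbol errors in the systematic part by $2t_1$ (via the per-row compression), invoke Lemma~\ref{lemma:enoughRedundancy} to bound the parity-symbol erasures by $4t_2$, and then apply the error-and-erasure guarantee $2e+f\le 4t$ of the $(k+4t,k)$ Reed--Solomon code. Your additional remarks about why the missing parity symbols count as erasures rather than errors (the decoder identifies the index $l$ of each surviving $\bfu_l$ from the distinct markers $\bfm_l$) make explicit a step the paper leaves implicit, but the argument is the same.
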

\begin{proof}
 
    In~\textsc{repair-adj-matrix} (line~\ref{line:func-repair-adj-matrix}, Alg.~\ref{alg:utilities}), a codeword is constructed by coalescing the elements in~\texttt{APPROX-COMP-A} with redundancy symbols in~\texttt{R-STRINGS}, which is then fed into a Reed-Solomon decoder (line~\ref{line:decodeAdjMatrix}, Alg.~\ref{alg:utilities}).
    By Lemma~\ref{lemma:enoughRedundancy}, \texttt{R-STRINGS} contains at least~$t-t_2$ non-empty entries, and as a result, the constructed codeword contains less than $4t_2$ erasures (from the~\texttt{empty} entries in~\texttt{R-STRINGS}).
    Meanwhile, by Lemma~\ref{lemma:boundErrors} the number of rows in which~$\bfA$ and~$\bfA'$  differ is bounded by~$2t_1$.
    Since the compression of~$\bfA$ and $\bfA'$ (line~\ref{line:compressIndividualRow}, Alg.~\ref{alg:utilities}) collapses every row into one extension-field element, it follows that the compressed versions of~\texttt{COMP-A} and~\texttt{APPROX-COMP-A} also differ by at most~$2t_1$ entries. 
    Hence, the systematic part of the constructed codeword has at most~$2t_1$ errors.

    Recall that the encoding process generated~$4t$ redundant symbols from~\texttt{COMP-A}, and hence the decoding in line~\ref{line:decodeAdjMatrix} of Alg.~\ref{alg:utilities} is guaranteed to be successful since
    \begin{equation*}
       2t_1\cdot2+4t_2\leq 4(t_1+t_2)\leq 4t,
    \end{equation*}
    where the last transition follows the actual number of breaks~$t_1+t_2$ is at most the security parameter~$t$.
    The proof is concluded since a~$(k+4t,k)$ Reed-Solomon code can simultaneously correct any~$a$ errors and any~$b$ erasures as long as~$4t\ge 2a+b$.
\end{proof}

\begin{algorithm}[t]\caption{Decoding, Part 2}\label{alg:Decode2}
%\begin{multicols}{2}
\begin{algorithmic}[1]
\Statex \textcolor{red}{$\triangleright~\textsc{recovery of higher level-beacons}$}
\State Let~\texttt{BEACONS} be a KV store s.t. for all level-$0$ beacon~$\bfs_j=\bfy'[i_j:i-j+c\log m-1]$, it holds that~$\texttt{BEACONS}[i]=\bfs$.\label{line:decode-kv-beacons}
\State Let~\texttt{UPDATED-BEACONS} be an empty KV store, and let~$\texttt{level}\gets0$\label{line:decode-new-beacons}
\State $\texttt{Z-FRAGMENTS},\bfy'\gets\textsc{anchor-fragments}(\texttt{BEACONS},\texttt{Z-FRAGMENTS},\bfz)$\label{line:affixz-fragments}

\While{\texttt{Z-FRAGMENTS} is not empty}\label{line:WhileBegin}
    \State $\texttt{level}\gets \texttt{level}+1$
    \ForAll{keys~$i$ in $\texttt{BEACONS}$ in ascending order}\label{line:ForBegin}
        % \State Let~$i_\text{next}$ be the smallest key greater than~$i$, or~$m+1$ if~$i$ is the greatest.
        \State Let~$i_\text{next}$ be the smallest key greater than~$i$, or~$m+c\log m+1$ if~$i$ is the greatest.
        \State $\texttt{UPDATED-BEACONS}[i]\gets\texttt{BEACONS}[i]$
        \If{$i_\text{next}-i\ge 2c\log m$}
            \State $u \gets (i+i_\text{next})/2$\label{line:middlePoint}
                \If{$\bfy'[u,u+c\log m-1]$ contains~$*$}~$\texttt{UPDATED-BEACONS}[u]\gets\texttt{empty}$\label{line:storeEmpty}
                \Else~$\texttt{UPDATED-BEACONS}[u]\gets(\bfy'[u,u+c\log m-1])$
                \EndIf
        \EndIf

    \EndFor
    \State $\texttt{BEACONS}\gets\textsc{repair-beacons}(\texttt{UPDATED-BEACONS},\texttt{R-STRINGS},\texttt{level})$\label{line:repairSigs}
    \ForAll{keys~$i$ in $\texttt{BEACONS}$}~$\bfy'[i:i+c\log m -1]\gets\texttt{BEACONS}[i]$ \EndFor
    \State $\texttt{UPDATED-BEACONS}\gets$ empty KV store.
    \State $\texttt{Z-FRAGMENTS},\bfy'\gets\textsc{anchor-fragments}(\texttt{BEACONS},\texttt{Z-FRAGMENTS},\bfy')$
\EndWhile
\Statex \textcolor{red}{$\triangleright~\textsc{recovery of residuals}$}
\State Let~\texttt{RESIDUALS} be an empty KV store.
\ForAll{keys~$i$ in $\texttt{BEACONS}$ in ascending order}\label{line:ForBeginResiduals}
    \State Let~$i_\text{next}$ be the smallest key greater than~$i$, or~$m+1$ if~$i$ is the greatest.
    \If{$\bfy'[i+c\log m:i_\text{next}-1]$ contains~$*$}~$\texttt{RESIDUALS}[i]\gets\texttt{empty}$
    \Else~$\texttt{RESIDUALS}[i]\gets\textsc{pad}(\bfy'[i+c\log m:i_\text{next}-1])$
    \EndIf
 \EndFor
\State $\texttt{REPAIRED-RESIDUALS}\gets \textsc{repair-residuals}(\texttt{RESIDUALS}, \texttt{R-STRINGS})$\label{line:repairResiduals}
\ForAll{keys~$i$ in $\texttt{REPAIRED-RESIDUALS}$}
    \State $\bfr\gets\textsc{de-pad}(\texttt{REPAIRED-RESIDUALS}[i]$)
    \State $\bfy'[i:i+\vert\bfr\vert -1]\gets\bfr$
\EndFor
\State \Return $\bfy'[c\log m+1:]$
\end{algorithmic}
%\end{multicols}
\end{algorithm}

Having obtained~$\bfA$, the algorithm allocates string~$\bfy'=\bfm_0\circ*^{m}$ to represent the information region of~$\bfc$.
The~$*$'s represents the~$m$ unknown values of the original~$\bfz$, which are preceded by~$\bfm_0$.
Using the correct ordering of all level-$0$ beacons in~$\bfA$, as well as the distance between every two adjacent ones, the decoder traverses all adjacent pairs of level-$0$ beacons according to~$\bfA$ and positions all level-$0$ beacons appropriately in~$\bfy'$, making
\begin{equation*}
    \bfy'=\bfm_0\circ*^{|\bfz_1|}\circ\bfs_1\circ*^{|\bfz_2|}\circ\bfs_2\circ\cdots\circ\bfs_{r}\circ*^{|\bfz_{r+1}|}~\mbox{(line~\ref{line:allocateyp} of Alg.~\ref{alg:Decode1})},
\end{equation*}
where $\bfz_1,\bfz_2,\ldots,\bfz_{r+1}$ are intervals between level-$0$ beacons~$\bfs_1,\ldots,\bfs_r$, defined in Lemma~\ref{lemma:speratedIntervals}.
That is, the decoder has identified all level-$0$ beacons and their exact positions.

\subsubsection{\textbf{Recovery of higher-level beacons}}

We proceed to Algorithm~\ref{alg:Decode2}.
In line~\ref{line:decode-kv-beacons}, the decoder initializes a KV store \texttt{BEACONS}, where the values are the level-$0$ beacons and the keys are their corresponding indices in~$\bfy'$.
In line~\ref{line:decode-new-beacons}, it defines an empty KV store \texttt{UPDATED-BEACONS} to hold the higher-level beacons.
In line~\ref{line:affixz-fragments} which follows, the algorithm~\emph{anchors} the fragments in~$\texttt{Z-FRAGMENTS}$ to their correct position in~$\bfy'$; by~\emph{correct} we mean that~$\bfy'[i:i+\vert f\vert-1]=\bff$ if~$\bfy[i:i+\vert f\vert-1]=\bff$, where~$\bfy=\bfm_0\circ \bfz$ (line~\ref{line:attachm0}, Alg.~\ref{alg:Encoding}).
The correctness follows from Lemma~\ref{lemma:insideInfoRegion} and Property~\ref{item:distinct}.
The former assures that the anchored fragments from~\texttt{Z-FRAGMENTS} are all contained in the information region of the codeword.
The latter guarantees uniqueness of all substrings of length~$c\log m$ in~$\bfz$, and enables the use of level-$0$ beacons for synchronization purposes (line~\ref{line:sync1}--\ref{line:sync2}, Alg.~\ref{alg:utilities}).
Having anchored all such fragments, all remaining fragments in~\texttt{Z-FRAGMENTS} contain no level-$0$ beacons.

In the \textbf{while} loop starting at line \ref{line:WhileBegin}, the decoding algorithm proceeds with the extraction of higher level beacons.
This is done by repeatedly traversing all beacons that have already been anchored to~$\bfz$.
In traversal~$\ell$, the algorithm locates the midpoint~$u$ between every two adjacent beacons in~$\bfy'$, as long as the gap between them is large enough to fit at least one more beacon.
Then, the algorithm identifies and collects the $c\log m$ bits which begin at~$u$ as a level-$\ell$ beacon.
This will result in~\texttt{UPDATED-BEACONS}, a KV store of beacons from level-$0$ through~$\ell$.

Note that, some entries in~\texttt{UPDATED-BEACONS} may have the value~\texttt{empty}, in the case that the respective part of~$\bfy'$ contains a~$*$ (line~\ref{line:storeEmpty}, Alg.~\ref{alg:Decode2}).
However, as long as certain conditions hold, the number of such entries is bounded as follows.
\begin{lemma}\label{lemma:boundHigherLevelbeacons}
    Assume that in the beginning of the~\textbf{for} loop starting at line~\ref{line:ForBegin},
    \begin{enumerate}
        \item For every index~$i\in[m+c\log m]$, if~$\bfy'[i]\neq*$, then~$\bfy'[i]=\bfy[i]$, and\label{item:yAndyp}
        \item For every beacon~$\bfs=\bfy[i:i+c\log m-1]\in\cup_{j=0}^{\ell-1}\cS_j$ from level~$0$ through~$(\ell-1)$, it holds that
        $\texttt{BEACONS}[i]=\bfs$.
    \end{enumerate}
    Then, when the~\textbf{for} loop ends, it holds that~$\texttt{UPDATED-BEACONS}[i']=\bfs'$ for every beacon~$\bfs'=\bfy[i':i'+c\log m-1]\in\cup_{j=0}^{\ell}\cS_j$ from level~$0$ through~$\ell$, with at most~$2t_1$ exceptions.
\end{lemma}
\begin{proof}
    Given that all level-$0$ to level-$(\ell-1)$ beacons are stored in~\texttt{BEACONS} with their indices being the keys, the decoder is able to obtain the correct position of every level-$\ell$ beacon in~$\bfy$ (line~\ref{line:middlePoint}).
    Hence, due to Assumption~(\ref{item:yAndyp}),~$\texttt{UPDATED-BEACONS}[i]=\bfs$ for every beacon~$\bfs=\bfy[i:i+c\log m-1]$ from level~$0$ through~$\ell$, except for those that contain~$*$ in the respective interval in~$\bfy'$ (line~\ref{line:storeEmpty}).
    If the respective interval in~$\bfy'$ of a level-$\ell$ beacon contains~$*$'s, it must be (entirely or partially) contained in at least one fragment~$\bff$ that haven't been anchored yet.
    
    We refer such a fragment as an~\emph{level-$\ell$ unanchored fragment}, i.e., one that the decoder cannot anchor to~$\bfy'$ using beacons from levels~$0$ through~$\ell$, since none of these beacons is entirely contained within it.
    Note that such~$\bff$ must be entirely contained in the information region.
    Otherwise, since it contains bits from both redundancy region and~$\bfz$, it would have been split into two fragments (line~\ref{line:transpoint}).
    
    Observe that a fragment may be level-$\ell$ unanchored due to exactly one of the following reasons.
    \begin{enumerate}
        \item $\bff$ is not stored in~$\texttt{Z-FRAGMENTS}$ (i.e.,~$\vert\bff\vert<3c\log m$),  or
        \item $\bff$ is stored in~$\texttt{Z-FRAGMENTS}$, but does not contain any level-$0$ to level-$(\ell-1)$ beacons.
    \end{enumerate}
    Otherwise, such~$\bff$ would have been anchored to~$\bfz$.

    Notice that, a level-$\ell$ unanchored fragment may intersect with at most~$2$ level-$\ell$ beacons.
    Otherwise, i.e., if three level-$\ell$ beacons intersect with it, they must be separated by two beacons from level~$0$ through~$(\ell-1)$.
    Therefore,~$\bff$ contains lower level beacons and is of length~$\vert\bff\vert>3c\log m$.
    As a result, $\bff$ would have been contained in~\texttt{Z-FRAGMENTS} due to its length, and would have been anchored to~$\bfy'$ since it contains a lower level beacon, a contradiction.

    Let the information region of codeword~$\bfc$ be~$\bfc[i_\text{trans}:]$, where~$\bfc[i_\text{trans}:i_\text{trans}+c\log m-1]=\bfm_0$ and~$\bfc[i_\text{trans}+c\log m:]=\bfz$.
    Let~$j$ be the smallest index greater than~$i_\text{trans}$ such that~$c_j$ and~$c_{j+1}$ are contained in different fragments.
    If such an index~$j$ does not exist, then no break falls in the information region, and as a result,~$t_1=0$; in this case, no level-$\ell$ unanchored fragment exist.
    
    Otherwise, if such an index~$j$ does exist, let~$\bff_\text{trans}\triangleq\bfc[i:j]$ be the fragment containing~$c_j$, where~$i\leq i_\text{trans}$.
    The (at most)~$t_2$ fragments on the left of~$\bff_\text{trans}$ are not level-$\ell$ unanchored since they do not reside in the information region.
    We claim that~$\bff_\text{trans}$ is not level-$\ell$ unanchored as well:
    If~$\bff_\text{trans}$ contains~$\bfm_0$, then it would have been broken into two (line~\ref{line:transpoint}); the left one is clearly not level-$\ell$ unanchored (it does not contain bits from~$\bfz$), and so is the right one (it contains~$\bfm_0$, a level-$0$ beacon).
    If~$\bff_\text{trans}$ does not contain~$\bfm_0$, then the index~$j$ resides in~$\bfm_0$ and hence~$\bff_\text{trans}$  is not level-$\ell$ unanchored since it does not contain any bits from~$\bfz$.

    Meanwhile, each of the (at most)~$t_1$ fragments on the right of~$\bff_\text{trans}$ may be level-$\ell$ unanchored.
    Together they may intersect with at most $2t_1$ level~$\ell$ beacons, and as a result,~\texttt{UPDATED-BEACONS} has at most~$2t_1$ entries with value \texttt{empty}.
\end{proof}

The preceding lemma allows the decoder to repair~\texttt{UPDATED-BEACONS} and recover all level-$\ell$ beacons.% stored in~\texttt{BEACONS}.
\begin{lemma}\label{lemma:RecoverAllbeaconsInALevel}
     Assume that for every level-$0$ to level-$\ell$ beacon that begins at index~$i$ in~$\bfy$,~$\texttt{UPDATED-BEACONS}[i]=\bfy[i:i+c\log m-1]$, with at most~$2t_1$ exceptions being~\texttt{empty}.
     Then, line~\ref{line:repairSigs} outputs a KV store~\texttt{BEACONS} such that for every level-$0$ to level-$\ell$ beacon that begins at index~$i$ in~$\bfy$,~$\texttt{BEACONS}[i]=\bfy[i:i+c\log m -1]$ (with no exceptions).
\end{lemma}
\begin{proof}
    In~\textsc{repair-beacons} (line~\ref{line:repairSigs}, Alg.~\ref{alg:utilities}), a codeword is constructed by coalescing the beacons stored in \texttt{UPDATED-BEACONS} with redundancy symbols stored in~\texttt{R-STRINGS} (line \ref{line:decodebeacons}, Alg.~\ref{alg:utilities}).
    Recall that the~$t-t_2$ redundancy strings contain~$2(t-t_2)$ redundancy symbols for each level of beacons, and as a result there exist~$2t_1$ erasures (\texttt{empty}) in the constructed codeword.
    The decoding in line~\ref{line:decodebeacons} of Alg.~\ref{alg:utilities} is guaranteed to be successful since
    $$
        2\cdot(t-t_2)-2\cdot t_1\geq 2\cdot (t-t_1-t_2)\geq 0,
    $$
    where the last transition follows since the actual number of breaks~$t_1+t_2$ is at most the security parameter~$t$.
    The proof is concluded since a~$(k+2t,k)$ Reed-Solomon code can correct~$2t$ erasures.
\end{proof}

The preceding lemmas enable the decoder to correctly anchor all fragments in~\texttt{Z-FRAGMENTS}, as seen in the following theorem.
\begin{theorem}
    The \textbf{while} loop starting at line \ref{line:WhileBegin} will eventually terminate, and by then every fragment in~\texttt{Z-FRAGMENTS} is correctly anchored to~$\bfy'$.
\end{theorem}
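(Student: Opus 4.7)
The plan is to argue by induction on the iteration count $\ell$ of the \textbf{while} loop that the following loop invariant is maintained: at the end of iteration $\ell$, (i) \texttt{SIGNATURES} contains exactly the level-$0$ through level-$\ell$ signatures of $\bfy$, indexed by their true starting positions, (ii) $\bfy'[i]=\bfy[i]$ for every position $i$ that is not marked $*$, and (iii) every fragment in the original \texttt{Z-FRAGMENTS} that contains at least one signature of level $\le \ell$ has been removed from \texttt{Z-FRAGMENTS} and correctly affixed to $\bfy'$. Once this invariant is set up, termination and correctness fall out.

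For the base case $\ell=0$, Theorem~\ref{theorem:recoverAdjacencyMatrix} gives the correct adjacency matrix $\bfA$, and the block of code in lines~\ref{line:affixl01}--\ref{line:affixl02} traverses the unique line graph encoded by $\bfA$ (using the fact, established in Section~\ref{section:encoding} from Properties~\ref{item:distinct} and~\ref{item:no-markers}, that $\bfA$ encodes a simple directed path) to write every level-$0$ signature at its true position in $\bfy'$. The call to \textsc{affix-fragments} on line~\ref{line:affixz-fragments} then matches each fragment containing a level-$0$ signature against the partially filled $\bfy'$; uniqueness of length-$c\log m$ substrings in $\bfy$ (Property~\ref{item:distinct}, plus Property~\ref{item:no-markers} so that no level-$0$ signature of $\bfz$ coincides with $\bfm_0$) guarantees that each such fragment has a unique valid placement, which by Lemma~\ref{lemma:insideInfoRegion} lies entirely in the information region. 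This establishes the invariant at $\ell=0$.

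For the inductive step, assume the invariant holds at the end of iteration $\ell-1$. The \textbf{for} loop beginning at line~\ref{line:ForBegin} traverses consecutive pairs of already-installed signatures in \texttt{SIGNATURES}, and at the mid-point of each sufficiently long gap it reads out the purported level-$\ell$ signature from $\bfy'$, recording \texttt{empty} exactly when that window still contains a $*$. The hypotheses of Lemma~\ref{lemma:boundHigherLevelSignatures} are precisely conditions (i) and (ii) of the invariant at level $\ell-1$, so Lemma~\ref{lemma:boundHigherLevelSignatures} applies and yields \texttt{UPDATED-SIGS} that agrees with the true level-$0$ through level-$\ell$ signatures at all but at most $2t_1$ positions (which are \texttt{empty}). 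Lemma~\ref{lemma:RecoverAllSignaturesInALevel} then corrects these erasures via the Reed-Solomon parities stored in \texttt{R-STRINGS}, so after line~\ref{line:repairSigs} the new \texttt{SIGNATURES} contains all level-$0$ through level-$\ell$ signatures at their correct positions. Writing these signatures back into $\bfy'$ preserves clause (ii) because they are written at their true indices. A subsequent call to \textsc{affix-fragments} then affixes exactly those fragments in the current \texttt{Z-FRAGMENTS} that contain at least one signature of level $\le \ell$, again invoking uniqueness of length-$c\log m$ substrings of $\bfy$ and Lemma~\ref{lemma:insideInfoRegion} to rule out wrong placements; this closes the induction.

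It remains to argue termination and completeness. The recursive midpoint construction used in encoding saturates after at most $\log\log m+6$ levels: between any two adjacent level-$0$ signatures there are fewer than $2\beta\log m$ disjoint length-$c\log m$ windows (Lemma~\ref{lemma:speratedIntervals}), and each level doubles the number of interleaved signatures, so no new signatures are created once $\ell$ exceeds $\log\log m+6$. Hence by the invariant at that value of $\ell$, \texttt{SIGNATURES} agrees with the full set of signatures encoded by Algorithm~\ref{alg:Encoding}, and by Lemma~\ref{lemma:atLeastOneSig} every remaining fragment in \texttt{Z-FRAGMENTS} contains at least one such signature and is therefore affixed by the final \textsc{affix-fragments} call, emptying \texttt{Z-FRAGMENTS} and terminating the loop. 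The step I anticipate being the most delicate is verifying that the invariant's clause (ii) truly is preserved after each affixation: one must check that whenever \textsc{affix-fragments} writes fragment bits into a slot of $\bfy'$, those bits are consistent with (and hence equal to) the corresponding bits of $\bfy$; this is where Properties~\ref{item:distinct} and~\ref{item:no-markers} are essential, since they guarantee that a fragment's signature anchor uniquely locates it within $\bfy$ and therefore aligns its remaining bits correctly.
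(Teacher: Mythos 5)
Your proof is correct and follows essentially the same route as the paper's: an induction on the signature level using Lemma~\ref{lemma:boundHigherLevelSignatures} and Lemma~\ref{lemma:RecoverAllSignaturesInALevel} as the workhorses, combined with Lemma~\ref{lemma:atLeastOneSig} to show that every fragment eventually anchors to some signature and hence gets affixed. Your write-up is more explicit than the paper's terse version (you spell out the three-clause loop invariant, the base case via Theorem~\ref{theorem:recoverAdjacencyMatrix}, and the termination bound after $\log\log m + 6$ levels), but the decomposition and key lemmas are identical.
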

\begin{proof}
    Note that the assumptions in Lemma~\ref{lemma:boundHigherLevelbeacons} are true for level-$1$ beacons.
    Together with Lemma~\ref{lemma:RecoverAllbeaconsInALevel}, line~\ref{line:repairSigs} at the first iteration of the \textbf{while} loop outputs a KV store~\texttt{BEACONS} such that for every level-$0$ beacon that begins at index~$i$ in~$\bfy$,~$\texttt{BEACONS}[i]=\bfy[i:i+c\log m -1]$.
    It also allows correct placements of fragments containing a level-$1$ beacon in~\texttt{Z-FRAGMENTS}.

    By induction, beacons (in all levels), as well as the fragments containing them, will be correctly anchored to~$\bfy'$.
    Recall that by Lemma~\ref{lemma:atLeastOneSig}, every fragment in~\texttt{Z-FRAGMENTS} contains at least one beacon, and as a result, the \textbf{while} loop eventually terminates with every fragment in~\texttt{Z-FRAGMENTS} being correctly anchored to~$\bfy'$.
\end{proof}

\subsubsection{\textbf{Recovery of residuals}}
Finally, since all beacons and their containing fragments are anchored to~$\bfy'$, it follows that all~$*$'s in~$\bfy'$ are now contained in residual parts.
Similar to Lemma~\ref{lemma:boundHigherLevelbeacons}, we have the following theorem.
\begin{theorem}\label{theorem:RecoverAllResiduals}
    Line~\ref{line:repairResiduals} outputs a KV store~\texttt{REPAIRED-RESIDUALS} such that~$\texttt{REPAIRED-RESIDUALS}[i]=\textsc{pad}(\bfr)$ for every residual~$\bfr=\bfy[i:i+\vert\bfr\vert]$.
\end{theorem}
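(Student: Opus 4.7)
The plan is to mirror the structure of Lemmas~\ref{lemma:boundHigherLevelSignatures} and~\ref{lemma:RecoverAllSignaturesInALevel}, but adapted to residuals. Let $t_1$ (resp.\ $t_2$) denote the number of breaks falling in the information (resp.\ redundancy) region, so $t_1+t_2\le t$. By the preceding theorem, at the point where line~\ref{line:repairResiduals} is reached, every fragment in \texttt{Z-FRAGMENTS} has been correctly affixed to $\bfy'$, so $\bfy'[j]=\bfy[j]$ for every index $j$ with $\bfy'[j]\ne *$. Consequently, for any key $i$ in \texttt{SIGNATURES} whose window $\bfy'[i+c\log m:i_\text{next}-1]$ contains no $*$, we have $\texttt{RESIDUALS}[i]=\textsc{pad}(\bfy[i+c\log m:i_\text{next}-1])$, matching the value stored by the encoder. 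Thus the only way for~\texttt{REPAIRED-RESIDUALS} to be wrong is if too many entries of~\texttt{RESIDUALS} are marked~\texttt{empty} for the Reed--Solomon decoder to compensate.

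The crux is therefore to bound the number of \texttt{empty} entries. Replicating the transition-point argument from Lemma~\ref{lemma:boundHigherLevelSignatures}: the fragment $\bff_\text{trans}$ containing the transition point, together with all fragments to its left, never contributes $*$'s in the information region (either it is split at line~\ref{line:transpoint} and its right part contains the level-$0$ signature $\bfm_0$, or it resides entirely in the redundancy region). Hence at most $t_1$ fragments in the information region can fail to be affixed; call these \emph{ill}. An ill fragment is necessarily of length $<3c\log m$ and contains no complete codeword of $\cC_\text{MU}$, so it contains no complete signature of any level. Because residuals are separated from one another by complete signatures of length~$c\log m$, a fragment containing no complete signature cannot bridge two distinct residuals. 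Therefore each ill fragment can intersect at most one residual, yielding at most $t_1$ empty entries in~\texttt{RESIDUALS}.

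Finally, we feed the constructed approximate codeword into the Reed--Solomon decoder inside \textsc{repair-residuals}. The $3t$ parity symbols generated in line~\ref{line:encodeResiduals} are distributed three per redundancy string, and by Lemma~\ref{lemma:enoughRedundancy} at least $t-t_2$ of these strings are recovered intact, so at most $3t_2$ parity symbols are erased. Combined with the at most $t_1$ empty residual entries, the total number of erasures is bounded by
\[
t_1+3t_2\le 3(t_1+t_2)\le 3t,
\]
which a $(k+3t,k)$ Reed--Solomon code corrects exactly. Hence \texttt{REPAIRED-RESIDUALS}$[i]=\textsc{pad}(\bfr)$ for every residual $\bfr=\bfy[i:i+\vert\bfr\vert]$, as required. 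The main subtlety is step two: certifying that the structural invariant ``all levels of signatures are in place after the \textbf{while} loop'' rules out an ill fragment from corrupting more than one residual; this is what allows the looser bound $t_1$ (rather than $2t_1$ as in the signature case) and keeps the erasure count within the $3t$ budget.
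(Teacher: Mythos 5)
Your proof has a genuine error in the central counting step. You claim that an ill fragment ``contains no complete codeword of $\cC_\text{MU}$, so it contains no complete signature of any level,'' and conclude that an ill fragment can intersect at most one residual. This inference is false: only level-$0$ signatures are codewords of $\cC_\text{MU}$. Higher-level signatures are arbitrary $c\log m$-bit substrings selected at midpoints between lower-level signatures (line~\ref{line:levelUp2} of Algorithm~\ref{alg:Encoding}) and are generally \emph{not} codewords of $\cC_\text{MU}$. Thus a fragment $\bff$ with $\vert\bff\vert<3c\log m$ and no level-$0$ signature can still wholly contain one or two higher-level signatures, yet it is ill because it is never placed into \texttt{Z-FRAGMENTS} (line~\ref{line:Z-FRAGMENTS} only admits fragments of length $\ge 3c\log m$ or containing a $\cC_\text{MU}$ codeword) and therefore is never affixed. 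Concretely, $\bff$ can span a partial residual, a complete higher-level signature, another residual, another complete higher-level signature, and a partial residual, for a total length as small as $2c\log m + 3 < 3c\log m$---this fragment intersects three distinct residuals, not one.

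The correct bound, and the one the paper proves, is that an ill fragment intersects \emph{at most three} residuals: intersecting four would force $\bff$ to fully contain three signatures and hence have length $\ge 3c\log m$, which would put it in \texttt{Z-FRAGMENTS} by length and therefore (by the preceding theorem) get it affixed, a contradiction. With at most $t_1$ ill fragments this gives at most $3t_1$ \texttt{empty} entries in \texttt{RESIDUALS}, and the Reed--Solomon budget check becomes $3t_1 + 3t_2 = 3(t_1+t_2)\le 3t$. The arithmetic happens to close with the same inequality you wrote down, but it is tight: the $3t$ parity symbols are in fact needed, and there is no slack as your $t_1 + 3t_2$ count suggests. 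Revise the ill-fragment argument to drop the false claim about higher-level signatures, establish the bound of three residuals per ill fragment, and re-derive the erasure count as $3t_1$.
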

\begin{proof}
    We first show that the number of~\texttt{empty} entries in~\texttt{RESIDUALS} is bounded by~$3\cdot t_1$.
    If a residual contains~$*$'s, it must be (entirely or partially) contained in at least one fragment~$\bff$, referred to as an~\emph{unanchored fragment for residuals}, that has not been anchored to~$\bfy'$ yet.
    Note that such~$\bff$ must be entirely contained in the information region.
    Otherwise, since it contains bits from both the redundancy region and from~$\bfz$, it would have been split into two (line~\ref{line:transpoint}).

    Observe that a fragment may be unanchored-for-residuals due to exactly one of the following reasons.
    \begin{enumerate}
        \item $\bff$ is not stored in~$\texttt{Z-FRAGMENTS}$ (i.e.,~$\vert\bff\vert<3c\log m$),  or
        \item $\bff$ is stored in~$\texttt{Z-FRAGMENTS}$, but does not contain any beacons (in any level).
    \end{enumerate}
    Otherwise, such~$\bff$ would have been anchored to~$\bfz$.

    Notice that, an unanchored-for-residuals fragment~$\bff$ may intersect at most~$3$ residuals.
    Otherwise, i.e., if four beacons intersect with it and each of is length at least~$1$, they must be separated by three beacons.
    Therefore,~$\bff$ contains at least three beacons, and therefore it is of length at least~$3c\log m$.
    As a result, $\bff$ would have resided in~\texttt{Z-FRAGMENTS} due to its length, and would have been anchored to~$\bfy'$ since it contains a lower level beacon, a contradiction.

    Let the information region of the codeword~$\bfc$ be~$\bfc[i_\text{trans}:]$, where~$\bfc[i_\text{trans}:i_\text{trans}+c\log m-1]=\bfm_0$ and~$\bfc[i_\text{trans}+c\log m:]=\bfz$.
    Let~$j$ be the smallest index greater than~$i_\text{trans}$ such that~$c_j$ and~$c_{j+1}$ are contained in different fragments.
    If such an index does not exist, then no break falls in the information region, and as a result,~$t_1=0$; in this case, no unanchored fragments exist.
    
    Otherwise, if such an index does exist, let~$\bff_\text{trans}\triangleq\bfc[i:j]$ be the fragment containing~$c_j$, where~$i\leq i_\text{trans}$.
    The (at most)~$t_2$ fragments on the left of~$\bff_\text{trans}$ are not unanchored since they are not contained in the information region.
    We claim that~$\bff_\text{trans}$ is not unanchored as well:
    If~$\bff_\text{trans}$ contains~$\bfm_0$, then it would have been broken into two (line~\ref{line:transpoint}); the left one is clearly not unanchored-for-residual (it does not contain bits from~$\bfz$), and so is the right one (it contains~$\bfm_0$, a level-$0$ beacon).
    If~$\bff_\text{trans}$ does not contain~$\bfm_0$, then the index~$j$ resides in~$\bfm_0$ and hence~$\bff_\text{trans}$  is not unanchored since it does not contain any bits from~$\bfz$.

    Meanwhile, each of the (at most)~$t_1$ fragments on the right of~$\bff_\text{trans}$ may be unanchored-for-residual.
    Together, they may intersect with at most $3\cdot t_1$ level~$\ell$ residuals, and as a result,~\texttt{UPDATED-BEACONS} has at most~$3\cdot t_1$ entries with value \texttt{empty}.    
\end{proof}

Finally, the decoder anchors all residuals in~\texttt{REPAIRED-RESIDUALS} to~$\bfz$, and after which~$\bfy'=\bfy$.
Recall that the marker~$\bfm_0$ is attached to the left of~$\bfz$ by the decoder, and needs to be removed to obtain~$\bfy$.
This implies the following, which concludes the proof of correctness of our construction.
\begin{theorem}
    Let~$\bfz\in\bi^m$ be a legit string, let~$\bfc$ be the output of Algorithm~\ref{alg:Encoding} with input~$\bfz$, and let~$\bff_1,\ldots,\bff_{\ell}$ be fragments of~$\bfc$ for some~$\ell\le t+1$. Then, the decoding algorithm with inputs~$\bff_1,\ldots,\bff_\ell$ outputs~$\bfz$.
\end{theorem}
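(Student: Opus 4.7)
The plan is to assemble the lemmas and theorems developed throughout Section~\ref{section:decoding} into a single end-to-end correctness argument that tracks the state of $\bfy'$ as the decoder progresses. First I would verify the initial bookkeeping: the $\bfm_0$ detection step correctly splits any fragment straddling the transition between the redundancy and information regions, after which Lemma~\ref{lemma:insideInfoRegion} ensures that every fragment in~\texttt{Z-FRAGMENTS} lies wholly inside the information region of~$\bfc$, and every fragment in~\texttt{R-FRAGMENTS} contains at least one of the markers~$\bfm_1,\ldots,\bfm_t$. Combined with the instrumentation of Section~\ref{section:encoding} and Lemma~\ref{lemma:enoughRedundancy}, the extraction of~\texttt{R-STRINGS} recovers at least~$t-t_2$ intact redundancy strings, where~$t_2$ is the number of breaks in the redundancy region.

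Next I would show that after line~\ref{line:repairHisto} the decoder holds the correct adjacency matrix~$\bfA$: Lemma~\ref{lemma:boundErrors} controls the number of row-errors in the approximate matrix~$\bfA'$ by~$2t_1$ (with $t_1+t_2\le t$), and Theorem~\ref{theorem:recoverAdjacencyMatrix} then concludes that the systematic Reed--Solomon decoding succeeds. At this point the \textbf{while} loop at line~\ref{line:affixl01} deterministically positions every level-$0$ signature in~$\bfy'$ using the pairwise distances encoded in~$\bfA$; by Property~\ref{item:distinct} of a legit~$\bfz$, each level-$0$ signature admits a unique matching location, so the affix step at line~\ref{line:affixz-fragments} correctly installs every fragment of~\texttt{Z-FRAGMENTS} that contains a level-$0$ signature.

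The heart of the proof is then an induction on the signature level~$\ell\ge 1$ executed inside the \textbf{while} loop at line~\ref{line:WhileBegin}. Inductively assume that before the~$\ell$-th iteration the two hypotheses of Lemma~\ref{lemma:boundHigherLevelSignatures} hold, namely that~$\bfy'$ agrees with~$\bfy$ on all non-$*$ positions and that~\texttt{SIGNATURES} stores every level-$0$ through level-$(\ell-1)$ signature at its correct index. Lemma~\ref{lemma:boundHigherLevelSignatures} then bounds the number of~\texttt{empty} entries in~\texttt{UPDATED-SIGS} by~$2t_1$, and Lemma~\ref{lemma:RecoverAllSignaturesInALevel} upgrades these to a full recovery of all level-$\ell$ signatures via the~$2(t-t_2)$ Reed--Solomon redundancy symbols embedded per level. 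Writing the recovered signatures back into~$\bfy'$ and calling \textsc{affix-fragments} preserves the induction hypothesis for the next level, and by Lemma~\ref{lemma:atLeastOneSig} every remaining fragment in~\texttt{Z-FRAGMENTS} contains some signature, so the loop terminates with~\texttt{Z-FRAGMENTS} empty within $O(\log\log m)$ iterations (since the encoder stops adding signatures at the same depth).

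Finally I would apply Theorem~\ref{theorem:RecoverAllResiduals} to the gaps that remain between adjacent signatures: at most~$3t_1$ residuals are~\texttt{empty} in~\texttt{RESIDUALS}, which is within the error-correcting radius of the~$3t$-parity Reed--Solomon code used in line~\ref{line:encodeResiduals} of Algorithm~\ref{alg:Encoding}. De-padding and writing the repaired residuals into~$\bfy'$ then fills in the remaining~$*$'s, yielding~$\bfy'=\bfy=\bfm_0\circ\bfz$, and the final slice~$\bfy'[c\log m+1:]$ returns~$\bfz$ exactly. The main obstacle in making all of this rigorous is bookkeeping the invariant that $\bfy'$ matches~$\bfy$ on its known positions throughout the recursion, since any off-by-one in an index or any mis-accounting of how a single break can corrupt multiple rows, signatures, or residuals would break either the Reed--Solomon distance budget or the inductive hypothesis; this is precisely what Lemmas~\ref{lemma:boundErrors},~\ref{lemma:boundHigherLevelSignatures}, and~\ref{theorem:RecoverAllResiduals} have been designed to control, so the concluding theorem is essentially a one-paragraph induction that chains them together.
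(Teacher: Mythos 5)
Your proposal is correct and follows exactly the paper's approach: the paper states this theorem without a standalone proof, since it is the culmination of Lemmas~\ref{lemma:insideInfoRegion}--\ref{lemma:RecoverAllSignaturesInALevel} and Theorems~\ref{theorem:recoverAdjacencyMatrix} and~\ref{theorem:RecoverAllResiduals}, which is precisely the chain you assemble. Your inductive framing of the \textbf{while} loop and your bookkeeping of the invariant that $\bfy'$ agrees with $\bfy$ on non-$*$ positions matches the implicit structure of the paper's argument.
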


In real-word scenarios, it may be crucial in security-critical applications to tolerate fragment losses, as the adversary may choose to hide a portion of the bits in order to fail the decoding.
Although this is not our main purpose in this paper, our codes have the added benefit of tolerating losses of short fragments, as shown in the following theorem.

\begin{theorem}
    The proposed~$(n, t)$-BRC is tolerant to the loss of any of the~$t+1$ fragments whose total length is less than~$c\log m$ bits.
\end{theorem}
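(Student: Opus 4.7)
The plan is to show that every fragment $\bff$ of length strictly less than $c\log m$ is invisible to Algorithm~\ref{alg:Decode}, so hiding such a fragment produces the same decoder output as keeping it. The whole argument reduces to a classification check on the fragment-sorting step.

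First, I would note that every codeword of $\cC_\text{MU}$ has length exactly $c\log m$, so every marker $\bfm_l$ and every level-$0$ signature has length $c\log m$. A fragment with $|\bff| < c\log m$ therefore cannot contain a complete marker, cannot contain a discernible level-$0$ signature, and a fortiori has length strictly less than $3c\log m$. Consulting lines~\ref{line:R-FRAGMENTS} and~\ref{line:Z-FRAGMENTS}, such a fragment is placed neither in \texttt{R-FRAGMENTS} nor in \texttt{Z-FRAGMENTS}; the preliminary splitting at line~\ref{line:transpoint} is likewise not triggered, since it looks for a fragment containing the full marker $\bfm_0$. Consequently the decoder silently discards $\bff$, regardless of whether it straddles the region boundary or sits inside a single region.

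Next I would verify that the decoder's internal state, namely the approximate adjacency matrix $\bfA'$, the stores \texttt{R-STRINGS}, \texttt{SIGNATURES}, \texttt{UPDATED-SIGS}, \texttt{RESIDUALS}, and the reconstructed string $\bfy'$, is identical in the kept-vs-hidden scenarios; in both cases the positions occupied by $\bff$ in the original codeword simply remain as $*$'s in $\bfy'$. Moreover, the break counts $t_1$ and $t_2$ depend only on where the adversary cuts, not on what is subsequently hidden, so $t_1+t_2\le t$ still holds. Hence Lemmas~\ref{lemma:boundErrors}, \ref{lemma:enoughRedundancy}, \ref{lemma:boundHigherLevelSignatures}, \ref{lemma:RecoverAllSignaturesInALevel} and Theorems~\ref{theorem:recoverAdjacencyMatrix}, \ref{theorem:RecoverAllResiduals} apply verbatim, and the decoder still recovers $\bfz$.

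The main subtlety I foresee is in the residual-recovery stage (Theorem~\ref{theorem:RecoverAllResiduals}), where a naive worry is that lost fragments introduce additional $*$-contaminated residuals that inflate the $3t_1$ bound. The resolution is that a short fragment, even if kept, is never affixed to $\bfy'$ (because it lies in neither \texttt{R-FRAGMENTS} nor \texttt{Z-FRAGMENTS}), so it is already an ill-for-residuals fragment under the original counting; hiding it therefore produces no new $*$'s beyond those it would have contributed as an unaffixed fragment. Combined with the elementary geometric observation that a contiguous substring of length $<c\log m$ can intersect at most one non-empty residual (since any two non-empty residuals are separated by a full signature of length $c\log m$), the $3t_1$ erasure bound continues to hold, the Reed-Solomon layer corrects those erasures, and the conclusion follows.
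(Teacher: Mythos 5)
Your proof is correct and takes essentially the same approach as the paper: the key observation in both is that a fragment of length $<c\log m$ can contain neither a marker nor a level-$0$ signature (and is certainly shorter than $3c\log m$), so it is placed in neither \texttt{R-FRAGMENTS} nor \texttt{Z-FRAGMENTS} and is therefore never touched by the decoder. Your additional bookkeeping about the transition split, the unchanged internal state, and the residual-erasure count is more thorough than the paper's one-sentence argument, but it develops the same core logic.
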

\begin{proof}
    The decoding algorithm uses fragments in~\texttt{R-FRAGMENTS} and~\texttt{Z-FRAGMENTS}, and both of them exclude fragments that are shorter than~$c\log m$ bits (line~\ref{line:R-FRAGMENTS}--\ref{line:Z-FRAGMENTS}, Alg.~\ref{alg:Decode1}).
\end{proof}

\section{Redundancy Analysis}\label{section:redundancy}
We now present an analysis of the redundancy in the encoding process (Section~\ref{section:encoding}), whose crux is bounding the success probability of choosing a legit binary string~$\bfz$ (Definition~\ref{def:legit}).
Additional components of the redundancy, such as Reed-Solomon parity symbols and markers, are easier to analyze and will be addressed in the sequel.
For the following theorem, recall that~$m=|\bfz|$ and~$\cC_\text{MU}$ is a mutually uncorrelated code of length~$c\log m$ and size~$\vert\cC_\text{MU}\vert\geq\frac{2^{c\log m}}{\beta c\log m}$, where~$\beta= 32$.
The following theorem forms the basis of our analysis; it is inspired by ideas from~\cite[Theorem~4.4]{cheng2018deterministic}. 
\begin{theorem}\label{theorem:legit-probability}
    A uniformly random string $\bfz\in\bi^m$ is legit (Definition~\ref{def:legit}) with probability~$1-1/\operatorname{poly}(m)$.
\end{theorem}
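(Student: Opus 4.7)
The plan is to bound the probability that each of the three properties of Definition~\ref{def:legit} fails by $1/\operatorname{poly}(m)$ and conclude by a final union bound. The basic ingredient is that, since $\bfz \in \bi^m$ is uniform, any fixed length-$c\log m$ substring of $\bfz$ equals any prescribed pattern with probability exactly $2^{-c\log m} = m^{-c}$, and substrings at disjoint bit positions are independent.

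For Property~\ref{item:distinct}, I would observe that for a single pair of non-overlapping positions the collision probability is $m^{-c}$ by independence of disjoint substrings, so a union bound over fewer than $m^2$ such pairs yields failure probability $O(m^{2-c})$, which is $1/\operatorname{poly}(m)$ for $c \ge 3$. For Property~\ref{item:no-markers}, fixing any of the $t+1$ markers and any of the $O(m)$ starting positions, the equality probability is again $m^{-c}$; since $t = o(m)$ in our regime, a union bound gives $O(m^{2-c})$ as well.

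The heart of the argument is Property~\ref{item:distance}. For any fixed starting position $j$ of a length-$L$ interval with $L = 2\beta c\log^2 m + c\log m - 1$, I would exhibit $k = 2\beta \log m$ pairwise disjoint length-$c\log m$ sub-intervals lying wholly inside $[j, j+L-1]$; the corresponding substrings of $\bfz$ are mutually independent and uniformly distributed over $\bi^{c\log m}$. Each such substring lies in $\cC_\text{MU}$ with probability $p = \vert\cC_\text{MU}\vert/2^{c\log m} \ge 1/(\beta c\log m)$, so by independence the probability that none of the $k$ sub-intervals is a codeword is at most $(1-p)^k \le \exp(-kp)$. Combined with a union bound over the $O(m)$ possible interval starting positions, this is meant to yield the stated $1 - 1/\operatorname{poly}(m)$ bound once aggregated with Properties~\ref{item:distinct} and~\ref{item:no-markers}.

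The main obstacle is ensuring that the per-window bound $(1-p)^k$ is strong enough to survive the union bound over the $\Theta(m)$ interval positions. With the nominal choice $k = 2\beta\log m$ the product $kp$ is the constant $2/c$, so a vanilla application of $(1-p)^k$ is by itself too weak; one must therefore sharpen the concentration, for example by applying a Chernoff bound to the sum of the $k$ independent Bernoulli$(p)$ indicators (which is legitimate precisely because disjoint-block events are truly independent) or by averaging over several shifts of the disjoint-block partition to boost the effective number of independent trials. Once this quantitative step is carried out, the three per-property bounds combine via a final union bound to conclude the proof.
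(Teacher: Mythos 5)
There is a genuine gap in your treatment of Property~\ref{item:distance}, and it is exactly where you flag the difficulty. Restricting attention to $k = 2\beta\log m$ pairwise-disjoint blocks inside a length-$L$ window gives $\Pr[\text{no codeword among those blocks}] \le (1-p)^k$ with $kp = 2\beta\log m \cdot \tfrac{1}{\beta c\log m} = 2/c$, a constant. Neither of your two proposed repairs fixes this. A Chernoff bound applied to the sum of $k$ independent Bernoulli$(p)$ indicators cannot improve on the exact value $(1-p)^k$ for the probability that the sum is zero, so it returns the same $e^{-2/c}$-type constant. Averaging over the $c\log m$ shifts of the disjoint-block partition does not help either: the events ``no codeword at any block of shift $s$,'' as $s$ varies, are not independent of one another (they all live in the same $O(\log^2 m)$ bits), so you cannot multiply their probabilities; at best you take a minimum over shifts, which again yields a constant. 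The fundamental problem is that you are throwing away a $c\log m$ factor in the number of positions; you need all $d \approx 2\beta c\log^2 m$ starting positions to contribute, not just the $d/(c\log m)$ non-overlapping ones.

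The paper closes this gap by exploiting a \emph{stronger} structural fact than independence: for nearby starting positions $i, j$ with $|i-j| < c\log m$, the events $E_i$ and $E_j$ (``a codeword starts at $i$,'' ``a codeword starts at $j$'') are \emph{mutually exclusive}, because codewords of an MU code cannot overlap. From $\Pr[E_i] = \Pr[E_i \cap \bigcap_{\text{nearby }j<i} E_j^c] + 0$ one gets $\Pr[E_i \mid \bigcap_{\text{nearby }j<i} E_j^c] \ge P$, i.e.\ conditioning on recent failures does not hurt the current trial. Combined with genuine independence for positions $\ge c\log m$ apart, a chain-rule expansion of $\Pr[S_d^c]$ yields a factor of at most $(1-P)$ for \emph{every one} of the $d$ positions, giving $\Pr[S_d^c] \le (1-P)^d \le e^{-2\log m} = m^{-2}$, which survives the union bound over $\Theta(m)$ windows. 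Your bounds for Properties~\ref{item:distinct} and~\ref{item:no-markers} are fine and match the paper's; it is only the overlapping-window argument for Property~\ref{item:distance} that needs the chain-rule/mutual-exclusion device rather than a disjointification.
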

\begin{proof}
    For Property~\ref{item:distance}, let~$I$ be an arbitrary interval of~$\bfz$ of length~$d+c\log m-1$, where~$d=2\beta c\log^2m$.
    For~$i\in[d]$, let~$E_i$ be the event that~$I[i:i+c\log m -1]$ is a level-$0$ beacon (i.e., a codeword of~$\cC_{\text{MU}}$), let~$S_i=\cup_{j=1}^i E_i$,
    and let~$E_i^c$ and~$S_i^c$ be their complements. 
    We bound the probability of Property~\ref{item:distance} by first bounding~$\Pr(S_d^c)$, i.e., the probability that no level-$0$ beacon starts at the first~$d$ bits of~$I$, which is equivalent to the probability that~$I$ contains a level-$0$ beacon.
    Then, we apply the union bound over all such intervals of~$\bfz$.

    For distinct~$i,j\in[d]$ we have the following observations.
    \begin{enumerate}[start=1,label={(\Alph*)}]
        \item If~$j-i< c\log m$ then~$E_i\cap E_j=\varnothing$, since codewords of~$\cC_\text{MU}$ cannot overlap. \label{item:emptyintersection}
        \item If~$j-i\ge c\log m$ then~$E_i$ and~$E_j$ are independent, since~$E_i$ depends on bits in~$I[i:i+c\log m-1]$ and~$E_j$ depends on bits in~$I[j:j+c\log m-1]$.\label{item:independent}   
    \end{enumerate}

    Since~$S_d^c=E_d^c\cap\ldots\cap E_{c\log m+1}^c\cap S_{c\log m}^c$, the chain rule for probability implies that
    \begin{equation}\label{eq:probablity-cw-free}
        \Pr(S_d^c)=\Pr(E_d^c\vert \cap_{i=1}^{d-1} E_i^c)\Pr(E_{d-1}^c\vert \cap_{i=1}^{d-2}E_i^c)\cdot\ldots \cdot\Pr(E_{c\log m+1}^c\vert S_{c\log m}^c)\Pr(S_{c\log m}^c).
    \end{equation}
    To bound the rightmost term in~\eqref{eq:probablity-cw-free},
    \begin{equation}\label{eq:first-clogm-indices}
        \begin{split}
            \Pr(S_{c\log m}^c)&=1-\Pr(\cup_{i=1}^{c\log m}E_i)
            \overset{(a)}{=}1-\sum_{i}\Pr(E_i)+\sum_{i,j}\Pr(E_i\cap E_j)-\sum_{i,j,k}\Pr(E_i\cap E_j\cap E_k)+\ldots\\
            &\overset{(b)}{=}1-\sum_{i\in[c\log m]}\Pr(E_i) = 1- c\log m \cdot P,~\mbox{where }P\triangleq\Pr(E_i)=\frac{\vert\cC_\text{MU}\vert}{2^{c\log m}}\overset{(c)}{\geq} \frac{1}{\beta c\log m}.
        \end{split}
    \end{equation}
    Note that~$(a)$ follows from the inclusion-exclusion principle for probability events, $(b)$ follows from Observation~\ref{item:emptyintersection} above, and~$(c)$ follows from the fact that~$\vert\cC_\text{MU}\vert\geq\frac{2^{c\log m}}{\beta c\log m}$.

    To bound the remaining terms in~\eqref{eq:probablity-cw-free}, it follows from Observation~\ref{item:independent} that 
    \begin{align}\label{equation:dependsOnPrevious}
        P_s&\triangleq \Pr(E_j^c\vert \cap_{i=1}^{j-1}E_i^c)=\Pr(E_j^c \vert \cap_{i=j-c\log m+1}^{j-1} E_i^c)
    \end{align}
    for every~$j\in[c\log m+1:d]$, and furthermore, \eqref{equation:dependsOnPrevious} is identical for every such~$j$.
    Notice that~$\Pr(S_{c\log m}^c)=P_s\cdot \Pr(S_{c\log m-1}^c)$, and as a result,
    \begin{equation}\label{eq:conditional-probability}
         P_s=\frac{\Pr(S_{c\log m}^c)}{\Pr(S_{c\log m-1}^c)}=\frac{1-P\cdot c\log m}{1-P\cdot (c\log m -1)}=1-\frac{P}{1-P\cdot (c\log m -1)}\leq 1-P,
    \end{equation}
    where the last inequality follows since~$1-P\cdot(c\log m-1)$ is positive and at most~$1$.
    To bound~$\Pr(S_{d}^c)$, we combine the above as follows.
    \begin{equation}
        \begin{split}
             \Pr(S_{d}^c) & \overset{\eqref{eq:probablity-cw-free}}{=} \left[\textstyle\prod_{j=c\log m+1}^d \Pr(E_{j}^c\vert\cap_{i=1}^{j-1}E_i^c)\right] \cdot \Pr(S_{c\log m}^c)\overset{\eqref{equation:dependsOnPrevious}}{=} \left[\textstyle\prod_{j=c\log m+1}^d \Pr(E_{j}^c\vert\cap_{i=j-c\log m+1}^{j-1}E_i^c)\right] \cdot \Pr(S_{c\log m}^c)\\
             &=P_s^{d-c\log m }\cdot \Pr(S_{c\log m}^c)\overset{\eqref{eq:conditional-probability}}{\le} (1-P)^{d-c\log m}\cdot  \Pr(S_{c\log m}^c)
             \overset{\eqref{eq:first-clogm-indices}}{=} (1-P)^{d-c\log m}\cdot (1-c\log m\cdot P)\\
             &\overset{(d)}{\leq} (1-P)^{d-c\log m}\cdot  (1- P)^{c\log m}
             = \left(1-\textstyle\frac{1}{\beta c\log m}\right)^d\\
             &=\left(1-\textstyle\frac{1}{\beta c\log m}\right)^{2\beta c\log^2m}\leq e^{-2\log m}\leq m^{-2},
        \end{split}
    \end{equation}
    where $(d)$ is known as Bernoulli's inequality\footnote{Bernoulli's inequality states that~$1+rx\le (1+x)^r$ for every real number~$r\ge1$ and~$x\ge-1$.}.
    Therefore, the probability of an interval of length~$2\beta c\log^2m+c\log m-1$ to be free of codewords of~$\cC_\text{MU}$ is bounded by~$m^{-2}$.
    By applying the union bound, we have that the probability that such an interval exists in~$\bfz$ is at most~$m^{-1}$. Therefore, the probability that Property~\ref{item:distance} holds, i.e., that every interval of~$\bfz$ of length~$2\beta c\log^2m+c\log m -1$ contains at least one codeword of~$\cC_\text{MU}$, is at least~$1-1/m$.

    For property~\ref{item:distinct}, consider any two intervals of length~$d=c\log m$ starting at indices~$i,j$ such that~$j-i\geq d$.
    Then,
    \begin{align*}    
    \Pr(\bfz[i:i+d-1]=\bfz[j:j+d -1))=\sum_{r\in\bi^{d}}\Pr(\bfz[j:j+d-1]=r\mid \bfz[i:i+d-1]=r)\Pr(\bfz[i,i+d-1]=r)=\frac{1}{2^{d}}.
    \end{align*}
    By applying the union bound, the probability that such~$i,j$ exist is at most~$\frac{m^2}{2^d}=\frac{m^2}{2^{c\log m}} =m^{-c+2}$.
 
    For Property~\ref{item:no-markers}, note that the probability that an interval of length~$c\log m$ of~$\bfz$ matches one of the markers is~$(t+1)m^{-c}$.
    By applying the union bound, the probability that such an interval exists is at most
    \begin{equation*}
        \frac{(t+1)(m-c\log m+1)}{m^c}<\frac{(t+1)m}{m^c}<m^{-c+2}.    
    \end{equation*}

    To conclude, a uniformly random~$\bfz\in\bi^m$ does not satisfy either of the three properties with probability~$1/\operatorname{poly}(m)$ each. Therefore,~$\bfz$ is legit, i.e., satisfies all three properties, with probability at least~$1-1/\text{poly(m)}$ by the union bound. 
\end{proof}

With Theorem~\ref{theorem:legit-probability}, we can formally provide the redundancy of our scheme in the following corollary.
\begin{corollary}
    The code has redundancy of~$O(t\log n\log\log n)$.
\end{corollary}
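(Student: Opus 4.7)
The plan is to directly tally the two contributions to redundancy: the bits added to $\bfz$ by the encoder, and the logarithmic loss from restricting $\bfz$ to legit strings, and then convert from $m$ to $n$. First I would read off the codeword length from the end of Section~\ref{section:encoding}, namely
\begin{equation*}
n = m + (6 + 2\log\log m)\cdot 3c\log m \cdot t + c\log m,
\end{equation*}
so that the number of bits added to $\bfz$ during encoding is $n - m = O(t\log m\log\log m)$.

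Next, I would argue that $|\mathcal{C}|$ equals the number of legit strings in $\bi^m$. The encoder takes a legit $\bfz$ as input, and the correctness of the decoder (together with the fact that the decoder returns $\bfz$ from any single intact copy of $\bfc$, which is the case with zero breaks) shows that the encoding map is injective on legit inputs. Hence $\log|\mathcal{C}|$ equals the base-$2$ logarithm of the count of legit strings. By Theorem~\ref{theorem:legit-probability}, a uniformly random $\bfz \in \bi^m$ is legit with probability $1 - 1/\operatorname{poly}(m)$, so the number of legit strings is at least $2^m(1 - 1/\operatorname{poly}(m))$, giving
\begin{equation*}
\log|\mathcal{C}| \geq m + \log\bigl(1 - 1/\operatorname{poly}(m)\bigr) = m - o(1).
\end{equation*}

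Combining the two bounds yields redundancy
\begin{equation*}
n - \log|\mathcal{C}| \leq (n - m) + o(1) = O(t\log m\log\log m).
\end{equation*}
The only step left is the conversion from $m$ to $n$. Since $m \leq n$ trivially, we have $\log m \leq \log n$ and $\log\log m \leq \log\log n$, so the redundancy is $O(t\log n\log\log n)$, as claimed. The only subtlety worth flagging is ensuring the estimate is meaningful, i.e., that $n - m = o(n)$ so that $m = \Theta(n)$ and the asymptotic replacement is faithful rather than vacuous; this follows from the standing assumption $t = o(n/(\log n\log\log n))$, under which the encoding overhead $O(t\log m\log\log m)$ is $o(m)$, confirming $\log m = \Theta(\log n)$ and $\log\log m = \Theta(\log\log n)$. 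There is no real obstacle here: the heavy lifting was done in Theorem~\ref{theorem:legit-probability}, and the corollary is just bookkeeping.
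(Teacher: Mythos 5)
Your proposal is correct and follows essentially the same line as the paper's proof: read off the codeword length formula from the end of Section~\ref{section:encoding}, lower-bound $\log|\mathcal{C}|$ via Theorem~\ref{theorem:legit-probability} by $m - o(1)$, and pass from $m$ to $n$ via $m \leq n$. The two extra remarks you add (injectivity of the encoding, and $m = \Theta(n)$ under the standing assumption on $t$) are sound but minor embellishments rather than a different argument.
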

\begin{proof}
    As shown in Section~\ref{section:code}, the codeword length~$n=\vert\bfc\vert = m+(6+2\log\log m)\cdot 3c\log m\cdot t +c\log m$.
    Meanwhile, the code size~$\vert\cC\vert$ depends on the probability of rejecting a uniform random string in~$\bi^m$ during code construction.
    By Theorem~\ref{theorem:legit-probability}, 
    $$
    \log \vert\cC\vert > \log [2^m\cdot (1-1/\text{poly(m)})]=m + \log (1-1/\text{poly(m)})=m-o(1).
    $$
    As a result, the redundancy is
    $$
        n-\log|\cC|=(6+2\log\log m)\cdot 3c\log m\cdot t +c\log m+o(1)=O(t\log m\log\log m).
    $$
    Note that~$n>m$, the redundancy is then~$O(t\log n\log\log n)$. 
\end{proof}

\section{Discussion and Future Work}\label{section:discussion}
In this paper, we analyze a new adversarial noise model in which an adversary can arbitrarily break the transmitted information.
We introduce $(n,t)$-break-resilient codes, which is a family of length-$n$ codes that guarantee correct reconstruction under any pattern of up to $t$ breaks.
We prove the existence of such codes with redundancy $O(t \log n)$ and show that this bound is indeed information-theoretically optimal.

We further present an explicit code construction that achieves redundancy $O(t \log n \log \log n)$.
A key idea in our design is leveraging naturally occurring patterns in a uniformly random string (Definition~\ref{def:legit}) as beacons, which reduces redundancy.
While inserting markers at variable positions is relatively straightforward, ensuring that these markers simultaneously satisfy the additional properties of Definition~\ref{def:legit} is nontrivial.
In particular, although our probabilistic analysis implies the existence of an injective mapping from $\bi^{m-1}$ to $\bi^m$ whose output is always legit, constructing such a mapping deterministically remains challenging and is left as an open problem.

% \printbibliography

\bibliographystyle{IEEEtran}
\bibliography{ref.bib} 

% Generated by IEEEtran.bst, version: 1.14 (2015/08/26)
\begin{thebibliography}{10}
\providecommand{\url}[1]{#1}
\csname url@samestyle\endcsname
\providecommand{\newblock}{\relax}
\providecommand{\bibinfo}[2]{#2}
\providecommand{\BIBentrySTDinterwordspacing}{\spaceskip=0pt\relax}
\providecommand{\BIBentryALTinterwordstretchfactor}{4}
\providecommand{\BIBentryALTinterwordspacing}{\spaceskip=\fontdimen2\font plus
\BIBentryALTinterwordstretchfactor\fontdimen3\font minus \fontdimen4\font\relax}
\providecommand{\BIBforeignlanguage}[2]{{%
\expandafter\ifx\csname l@#1\endcsname\relax
\typeout{** WARNING: IEEEtran.bst: No hyphenation pattern has been}%
\typeout{** loaded for the language `#1'. Using the pattern for}%
\typeout{** the default language instead.}%
\else
\language=\csname l@#1\endcsname
\fi
#2}}
\providecommand{\BIBdecl}{\relax}
\BIBdecl

\bibitem{elsayed2021information}
K.~A. ElSayed, A.~Dachowicz, and J.~H. Panchal, ``Information embedding in additive manufacturing through printing speed control,'' in \emph{Proceedings of the 2021 Workshop on Additive Manufacturing (3D Printing) Security}, 2021, pp. 31--37.

\bibitem{voris2017three}
J.~Voris, B.~F. Christen, J.~Alted, and D.~W. Crawford, ``Three dimensional (3d) printed objects with embedded identification (id) elements,'' May 2017, uS Patent 9,656,428.

\bibitem{wei2018embedding}
C.~Wei, Z.~Sun, Y.~Huang, and L.~Li, ``Embedding anti-counterfeiting features in metallic components via multiple material additive manufacturing,'' \emph{Additive Manufacturing}, vol.~24, pp. 1--12, 2018.

\bibitem{chen2019embedding}
F.~Chen, Y.~Luo, N.~G. Tsoutsos, M.~Maniatakos, K.~Shahin, and N.~Gupta, ``Embedding tracking codes in additive manufactured parts for product authentication,'' \emph{Advanced Engineering Materials}, vol.~21, no.~4, p. 1800495, 2019.

\bibitem{delmotte2019blind}
A.~Delmotte, K.~Tanaka, H.~Kubo, T.~Funatomi, and Y.~Mukaigawa, ``Blind watermarking for 3-d printed objects by locally modifying layer thickness,'' \emph{IEEE Transactions on Multimedia}, vol.~22, no.~11, pp. 2780--2791, 2019.

\bibitem{suzuki2017embedding}
M.~Suzuki, P.~Dechrueng, S.~Techavichian, P.~Silapasuphakornwong, H.~Torii, and K.~Uehira, ``Embedding information into objects fabricated with 3-d printers by forming fine cavities inside them,'' \emph{Electronic Imaging}, vol.~29, pp. 6--9, 2017.

\bibitem{li2018printracker}
Z.~Li, A.~S. Rathore, C.~Song, S.~Wei, Y.~Wang, and W.~Xu, ``Printracker: Fingerprinting 3d printers using commodity scanners,'' in \emph{Proceedings of the 2018 ACM sigsac conference on computer and communications security}, 2018, pp. 1306--1323.

\bibitem{harrison2012acoustic}
C.~Harrison, R.~Xiao, and S.~Hudson, ``Acoustic barcodes: passive, durable and inexpensive notched identification tags,'' in \emph{Proceedings of the 25th annual ACM symposium on User interface software and technology}, 2012, pp. 563--568.

\bibitem{sima2021coding}
J.~Sima, N.~Raviv, and J.~Bruck, ``On coding over sliced information,'' \emph{IEEE Transactions on Information Theory}, vol.~67, no.~5, pp. 2793--2807, 2021.

\bibitem{sima2024robust}
------, ``Robust indexing for the sliced channel: Almost optimal codes for substitutions and deletions,'' \emph{IEEE Transactions on Information Theory}, 2024.

\bibitem{lenz2019coding}
A.~Lenz, P.~H. Siegel, A.~Wachter-Zeh, and E.~Yaakobi, ``Coding over sets for dna storage,'' \emph{IEEE Transactions on Information Theory}, vol.~66, no.~4, pp. 2331--2351, 2019.

\bibitem{shomorony2021torn}
I.~Shomorony and A.~Vahid, ``Torn-paper coding,'' \emph{IEEE Transactions on Information Theory}, vol.~67, no.~12, pp. 7904--7913, 2021.

\bibitem{shomorony2020communicating}
------, ``Communicating over the torn-paper channel,'' in \emph{GLOBECOM 2020-2020 IEEE Global Communications Conference}.\hskip 1em plus 0.5em minus 0.4em\relax IEEE, 2020, pp. 1--6.

\bibitem{ravi2021capacity}
A.~N. Ravi, A.~Vahid, and I.~Shomorony, ``Capacity of the torn paper channel with lost pieces,'' in \emph{2021 IEEE International Symposium on Information Theory (ISIT)}.\hskip 1em plus 0.5em minus 0.4em\relax IEEE, 2021, pp. 1937--1942.

\bibitem{bar2023adversarial}
D.~Bar-Lev, S.~M.~E. Yaakobi, and Y.~Yehezkeally, ``Adversarial torn-paper codes,'' \emph{IEEE Transactions on Information Theory}, 2023.

\bibitem{goldstein2004minimum}
A.~Goldstein, P.~Kolman, and J.~Zheng, ``Minimum common string partition problem: Hardness and approximations,'' in \emph{International Symposium on Algorithms and Computation}.\hskip 1em plus 0.5em minus 0.4em\relax Springer, 2004, pp. 484--495.

\bibitem{jiang2012minimum}
H.~Jiang, B.~Zhu, D.~Zhu, and H.~Zhu, ``Minimum common string partition revisited,'' \emph{Journal of Combinatorial Optimization}, vol.~23, pp. 519--527, 2012.

\bibitem{damaschke2008minimum}
P.~Damaschke, ``Minimum common string partition parameterized,'' in \emph{International Workshop on Algorithms in Bioinformatics}.\hskip 1em plus 0.5em minus 0.4em\relax Springer, 2008, pp. 87--98.

\bibitem{chrobak2004greedy}
M.~Chrobak, P.~Kolman, and J.~Sgall, ``The greedy algorithm for the minimum common string partition problem,'' in \emph{International Workshop on Randomization and Approximation Techniques in Computer Science}.\hskip 1em plus 0.5em minus 0.4em\relax Springer, 2004, pp. 84--95.

\bibitem{wang2025secure}
C.~Wang, J.~Wang, M.~Zhou, V.~Pham, S.~Hao, C.~Zhou, N.~Zhang, and N.~Raviv, ``Secure information embedding in forensic 3d fingerprinting,'' in \emph{34th USENIX Security Symposium (USENIX Security 25)}, 2025, pp. 1887--1906.

\bibitem{levenshtein1970maximum}
V.~I. Levenshtein, ``Maximum number of words in codes without overlaps,'' \emph{Problemy Peredachi Informatsii}, vol.~6, no.~4, pp. 88--90, 1970.

\bibitem{gilbert1960synchronization}
E.~Gilbert, ``Synchronization of binary messages,'' \emph{IRE Transactions on Information Theory}, vol.~6, no.~4, pp. 470--477, 1960.

\bibitem{bajic2004distributed}
D.~Bajic and J.~Stojanovic, ``Distributed sequences and search process,'' in \emph{2004 IEEE International Conference on Communications (IEEE Cat. No. 04CH37577)}, vol.~1.\hskip 1em plus 0.5em minus 0.4em\relax IEEE, 2004, pp. 514--518.

\bibitem{bajic2014simple}
D.~Bajic and T.~Loncar-Turukalo, ``A simple suboptimal construction of cross-bifix-free codes,'' \emph{Cryptography and Communications}, vol.~6, no.~1, pp. 27--37, 2014.

\bibitem{chee2013cross}
Y.~M. Chee, H.~M. Kiah, P.~Purkayastha, and C.~Wang, ``Cross-bifix-free codes within a constant factor of optimality,'' \emph{IEEE Transactions on Information Theory}, vol.~59, no.~7, pp. 4668--4674, 2013.

\bibitem{bilotta2012new}
S.~Bilotta, E.~Pergola, and R.~Pinzani, ``A new approach to cross-bifix-free sets,'' \emph{IEEE Transactions on Information Theory}, vol.~58, no.~6, pp. 4058--4063, 2012.

\bibitem{blackburn2015non}
S.~R. Blackburn, ``Non-overlapping codes,'' \emph{IEEE Transactions on Information Theory}, vol.~61, no.~9, pp. 4890--4894, 2015.

\bibitem{wang2022q}
G.~Wang and Q.~Wang, ``Q-ary non-overlapping codes: A generating function approach,'' \emph{IEEE Transactions on Information Theory}, vol.~68, no.~8, pp. 5154--5164, 2022.

\bibitem{tabatabaei2015rewritable}
S.~H. Tabatabaei~Yazdi, Y.~Yuan, J.~Ma, H.~Zhao, and O.~Milenkovic, ``A rewritable, random-access dna-based storage system,'' \emph{Scientific reports}, vol.~5, no.~1, p. 14138, 2015.

\bibitem{yazdi2017portable}
S.~H.~T. Yazdi, R.~Gabrys, and O.~Milenkovic, ``Portable and error-free dna-based data storage,'' \emph{Scientific reports}, vol.~7, no.~1, p. 5011, 2017.

\bibitem{levy2018mutually}
M.~Levy and E.~Yaakobi, ``Mutually uncorrelated codes for dna storage,'' \emph{IEEE Transactions on Information Theory}, vol.~65, no.~6, pp. 3671--3691, 2018.

\bibitem{wikipedia_spherepacking}
\BIBentryALTinterwordspacing
{Wikipedia contributors}, ``Hamming bound,'' Wikipedia, The Free Encyclopedia, 2025, accessed: July 17, 2025. [Online]. Available: \url{https://en.wikipedia.org/wiki/Hamming_bound}
\BIBentrySTDinterwordspacing

\bibitem{cheng2018deterministic}
K.~Cheng, Z.~Jin, X.~Li, and K.~Wu, ``Deterministic document exchange protocols, and almost optimal binary codes for edit errors,'' in \emph{2018 IEEE 59th Annual Symposium on Foundations of Computer Science (FOCS)}.\hskip 1em plus 0.5em minus 0.4em\relax IEEE, 2018, pp. 200--211.

\end{thebibliography}

\begin{appendices}

\section{Histogram-based code construction over large alphabets}\label{appendix:histogram}
Let~$\Sigma$ be an alphabet with~$q$ symbols for some positive integer~$q$.
For~$a,b\in\Sigma^n$, we say $a$ is equivalent to~$b$ if they are form by the same multiset of symbols in~$\Sigma$, i.e., they share the same histogram of symbols.
A histogram-based code is formed by choosing exactly one word from every equivalence class, and hence no two codeword share the same histogram.

To decode the histogram-based code, the decoder simply counts the occurrence of symbols in the fragments and learns which equivalence class the codeword belongs to.
Since no two codewords share the same histogram, the decoding is guaranteed to be successful.

Clearly, the number of codewords equals to the number of such equivalence classes.
Using the stars and bars formula, this number is
\begin{equation}\label{eq:num-of-eq-class}
    \binom{q+n-1}{n}=\frac{(q+n-1)!}{n!(q-1)!}=\frac{(q+n-1)(q+n-2)\cdots q}{n!}.
\end{equation}
%Dividing~\eqref{eq:num-of-eq-class} by the number words in~$\Sigma^n$, we have
The redundancy of a histogram-based code is therefore
\begin{equation}
 n-\log_q\left(\frac{(q+n-1)(q+n-2)\cdots q}{n!}\right)=
 \log_q\left(n!\frac{q}{q+n-1}\frac{q}{q+n-2}\cdots\frac{q}{q}\right)<\log_q (n!).
\end{equation}
Hence, if~$q^c\geq n!$ for some constant~$c$, then the redundancy is less than~$c$ symbols.

\section{Auxiliary Functions}\label{appendix:auxiliary}
\begin{algorithm}\caption{Auxiliary Functions}\label{alg:utilities}
%\begin{multicols}{2}
\begin{algorithmic}[1]
\Function{\textsc{rs-encode}}{$(\texttt{info-word}),\texttt{num-parity}$}\label{line:encodeRS}
\EndFunction
\Function{\textsc{rs-decode}}{$(\texttt{codeword})$}\label{line:decodeRS}
\EndFunction

\Function{\textsc{compress-row}}{$\bfa$}\label{line:compressRow}
\EndFunction
\Function{\textsc{decompress-row}}{$\texttt{compRow}$}\label{line:decompressRow}
\EndFunction

\Function{\textsc{compress-adjacency-matrix}}{$\bfA$}\label{line:compress-adjacency-matrix}
\State Let~\texttt{COMP-A} be an empty array.
    \ForAll{row~$\bfa$ in the matrix~$\bfA$}
        \State $\texttt{COMP-A}.\textsc{append}(\textsc{compress-row}(\bfa))$\label{line:compressIndividualRow}
    \EndFor
\State \textbf{return}~$\texttt{COMP-A}$
\EndFunction

\Function{\textsc{repair-adj-matrix}}{$\texttt{APPROX-COMP-A},\texttt{R-STRINGS}$}\label{line:func-repair-adj-matrix}
    \State $
        \begin{aligned}
           \texttt{rs-decoded}\gets\textsc{rs-decode}(&\texttt{APPROX-COMP-A}[1],\ldots,\texttt{APPROX-COMP-A}[\vert \cC_\text{MU}\vert],\\
           &\texttt{R-STRINGS}[1][1:2c\log m],\ldots,\texttt{R-STRINGS}[1][6c\log m+1:8c\log m]\\
           &\cdots\\
           &\texttt{R-STRINGS}[t][1:2c\log m],\ldots,\texttt{R-STRINGS}[t][6c\log m+1:8c\log m]).\\
        \end{aligned}$\label{line:decodeAdjMatrix}
    \State Let~$\bfA$ be a matrix such that for every~$a\in[\vert \cC_\text{MU}\vert]$, its~$a$-th row is~$\textsc{decompress-row}(\texttt{decoded}[a])$.
    \State \Return $\bfA$
\EndFunction

\Function{\textsc{repair-beacons}}{$\texttt{BEACONS},\texttt{R-STRINGS},\texttt{level}$}\label{line:func-repair-beacons}
    \State Let~$i_1,\ldots,i_r$ be the keys in \texttt{BEACONS}
    \State $v\gets 8c\log m+(\texttt{level}-1)\cdot c\log m$
    \State $
            \begin{aligned}
                \texttt{decoded}\gets\textsc{rs-decode}(&\texttt{BEACONS}[i_1],\ldots,\texttt{BEACONS}[i_r],\\
                &\texttt{R-STRINGS}[0][v+1:v+c\log m],\texttt{R-STRINGS}[0][v+c\log m+1:v+2c\log m+1]\\
                &\cdots\\
                &\texttt{R-STRINGS}[t][v+1:v+c\log m],\texttt{R-STRINGS}[t][v+c\log m+1:v+2c\log m+1]).\\
            \end{aligned}$\label{line:decodebeacons}
    \ForAll{$j\in[r]$}~$\texttt{REPAIRED-BEACONS}\gets\texttt{decoded}[j]$\EndFor
\EndFunction

\Function{\textsc{repair-residuals}}{$\texttt{RESIDUALS},\texttt{R-STRINGS}$}\label{line:func-repair-residuals}
    \State Let~$i_1,\ldots,i_r$ be the keys in \texttt{RESIDUALS}
    \State $v\gets (4+2\log\log m)\cdot c\log m$
    \State $
            \begin{aligned}
                \texttt{decoded}\gets\textsc{rs-decode}(&\texttt{RESIDUALS}[i_1],\ldots,\texttt{RESIDUALS}[i_r],\\
                &\texttt{R-STRINGS}[0][v+1:v+c\log m],\ldots,\texttt{R-STRINGS}[0][v+2c\log m+1:v+3c\log m]\\
                &\cdots\\
                &\texttt{R-STRINGS}[t][v+1:v+c\log m],\ldots,\texttt{R-STRINGS}[t][v+2c\log m+1:v+3c\log m]).\\
            \end{aligned}$\label{line:decodeResiduals}
    \ForAll{$j\in[r]$}~$\texttt{REPAIRED-RESIDUALS}\gets\texttt{decoded}[j]$\EndFor
\EndFunction

\Function{\textsc{anchor-fragments}}{$\texttt{BEACONS},\texttt{Z-FRAGMENTS},\bfz$}
    \ForAll{$\bff$ in \texttt{Z-FRAGMENTS} and beacons~$\bfs$ in~\texttt{BEACONS}}.
        \If{there exist~$i,j$ such that~$\bfs=\bff[i:i+c\log m-1]=\texttt{BEACONS}[j]$}\label{line:sync1}
            \State $\bfz[j-i:j-i+|\bff|-1]\gets\bff$\label{line:sync2}
            \State \textbf{Delete}~$\bff$ from~\texttt{Z-FRAGMENTS}
        \EndIf
    \EndFor
    \State \Return \texttt{Z-FRAGMENTS},~$\bfz$
\EndFunction

\Function{\textsc{pad}}{$\bfs$}\label{line:paddingFunction}
    \State $\bfs\gets\bfs\circ 1$
    \While{$\vert\bfs\vert<c\log m$} $\bfs\gets\bfs\circ0$\EndWhile
    \State \Return $\bfs$
\EndFunction

\Function{\textsc{de-pad}}{$\bfs$}
    \While{$\bfs[1]=0$} remove the first bit from~$\bfs$\EndWhile
\State \Return $\bfs[1:\vert\bfs\vert-1]$
\EndFunction

\end{algorithmic}
%\end{multicols}
\end{algorithm}

\end{appendices}
 
\end{document}